\newtheorem{theorem}{Theorem}[section]
\newtheorem{lemma}[theorem]{Lemma}
\newtheorem{proposition}[theorem]{Proposition}
\newtheorem{corollary}[theorem]{Corollary}
\theoremstyle{definition}
\newtheorem{definition}[theorem]{Definition}
\theoremstyle{remark}
\numberwithin{equation}{section}
\begin{document}

\title{On  Simultaneous Percolation with Two Disk Types}

\author{Michal Yemini}
\address{Faculty of Engineering, Bar-Ilan University,
Ramat Gan 52900, Israel}
\email{michal.yemini.biu@gmail.com}

\author{Anelia Somekh-Baruch}
\address{Faculty of Engineering, Bar-Ilan University,
Ramat Gan 52900, Israel}
\email{anelia.somekhbaruch@gmail.com}

\author{Reuven Cohen}
\address{Department of Mathematics, Bar-Ilan University,
Ramat Gan 52900, Israel}
\email{reuven@math.biu.ac.il}

\author{Amir Leshem}
\address{Faculty of Engineering, Bar-Ilan University,
Ramat Gan 52900, Israel}
\email{leshem.amir2@gmail.com}
\thanks{This research was partially supported by Israel Science Foundation (ISF) grant 903/2013.}

\date{\today}

\keywords{Connectivity, Gilbert disk model, interference, heterogeneousness networks, cognitive radio networks}

\begin{abstract}
In this paper we consider the simultaneous percolation of two Gilbert disk models. The two models are connected through excluding disks, which prevent elements of the second model to be in the vicinity of the first model. Under these assumptions we characterize the region of densities in which the two models both have a unique infinite connected component.
The motivation for this work is  the co-existence of two cognitive radio networks.\end{abstract}

\maketitle

\section{Introduction}

One of the most desirable  properties of networks in communication theory is connectivity.
This property makes it possible to pass information between nodes from all over the network. Connectivity in large and immobile  networks can be verified by simple  routing algorithms such as Dijkstra's algorithm. However, the verification of the connectivity of mobile ad-hoc networks is less obvious since the node locations  vary over time.

One method for analyzing the connectivity of mobile ad-hoc networks is continuum percolation,  for example see \cite{DousseBaccelli2005,DousseFranceschetti2006,RenZhaoSwami2011,RenZhao2011,LuHuang2012,LiuLiu2015}. Assume that  the nodes of the network are distributed according to a Poisson point process (PPP) with density $\lambda$. Under a Gaussian channel assumption, the capacity of a link between two nodes is a decreasing function of the distance between the nodes.
Therefore, one can choose a distance $\rho$ such that if two nodes are within a distance $\rho$  they can communicate, or are said to be connected with one another.
This condition is equivalent to the condition that the two disks, each centered at the nodes' positions and radii $\frac{\rho}{2}$, intersect.
In this way we model a homogeneous network by a Gilbert disk (Boolean) model with density $\lambda$ and a fixed radius $\frac{\rho}{2}$.
A network is said to be connected (or percolated) under continuum percolation models if there exists an unbounded connected component in the network. It follows by \cite[Corollary 4.1]{MeesterRoy1996} that  there exist top to bottom ($T-B$) and left to right ($L-R$) crossings (see Section \ref{subsec:crossings} and \cite[chapter 4]{MeesterRoy1996}) in the Boolean model if and only if the model is percolated a.s.\

\begin{figure}
  \centering
  \includegraphics[scale=0.35]{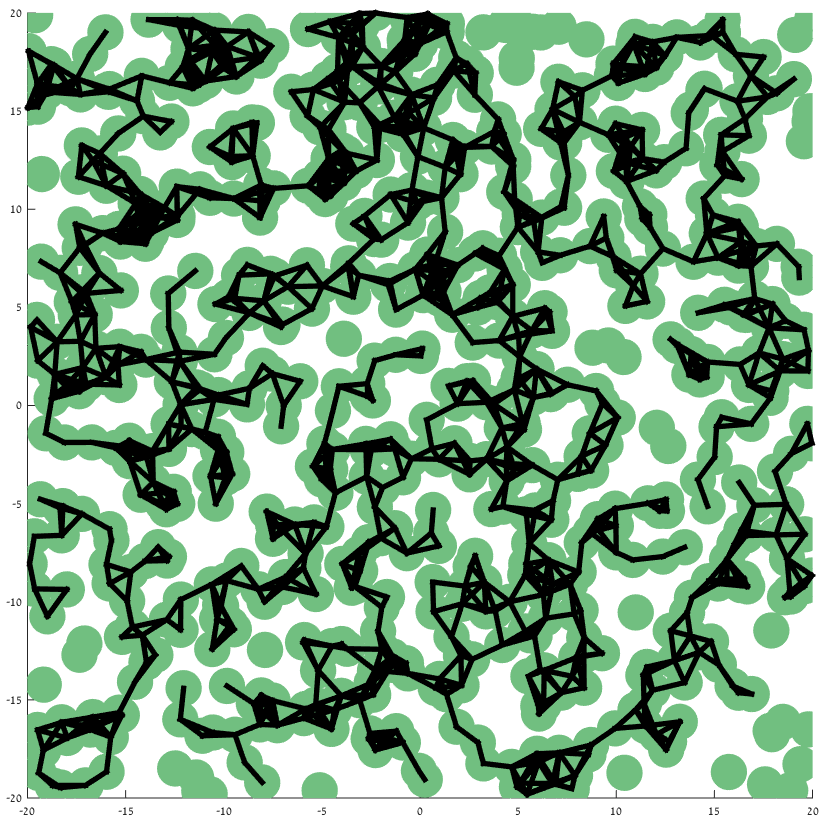}
    \caption{A realization of a homogeneous network with the parameters set as $\lambda=50\text{ km}^{-2}$ and  $\rho=D_t=180\text{ m}$. The black lines indicate the disks that compose the largest connected component. }\label{fig_primary_set1}
  \end{figure}

 The connectivity of large-scale homogeneous networks has been studied extensively in many papers, including \cite{GuptaKumar2000,Philips1989,Bettstetter2002,NiChandler1994,DousseBaccelli2005,DousseFranceschetti2006}. All of these models consider one network scenario. That is, it is assumed that all the nodes in the area of interest belong to the same network.
A realization of such a network is depicted in Fig.\ \ref{fig_primary_set1}.

In recent years, due to the scarcity of free static spectrum resources, a new concept dubbed  “Cognitive Radio” has emerged. The overarching goal of cognitive radio networks is to improve spectrum utilization by giving communication opportunities to cognitive (secondary) nodes  while limiting their interference on non-cognitive (primary) nodes in the network.
Several papers such as \cite{RenZhao2011,Ao2012,LuHuang2012,Tameemi2014,LiuLiu2015} have considered the connectivity of non-cooperative networks with heterogeneous nodes. Although these articles consider systems with primary and secondary networks, they are restricted to the connectivity of the secondary network. This leads to models in which the secondary nodes are components of a multi-hop network, but the primary nodes are only a part of a single-hop network. These degenerate primary networks do not capture the connectivity demands of the multi-hop primary network;  hence our motivation to analyze a generalized heterogeneous model in which both the primary and secondary nodes compose multi-hop networks.

In this paper we pursue the connectivity of both the secondary network and the primary network, a state we call simultaneous connectivity.  We assume that the nodes of the primary network are distributed according to a two-dimensional PPP. Additionally, the secondary nodes are distributed according to a two-dimensional PPP which is independent of the primary network's PPP. Like the homogeneous model, the capacity of a link  under a Gaussian channel assumption is a decreasing function of the link length and an increasing function of the closest interfering node that belongs to the other network. Consequently, we  assume that each primary node has a transmission range $D_t$, and every secondary node has a transmission range $d_t$. Also, we assume that a secondary node is not active, unless its distance to each primary node is greater than $D_f$.
That is, $D_f$ is the radius of the guard zone of primary nodes from the interference of secondary nodes and vice versa.
We call this model the \textit{heterogeneous model}. A realization of such a model is depicted in Figs.\ \ref{fig_primary_set1}-\ref{fig_secondary_set1}
where Fig.\ \ref{fig_primary_set1} depicts the primary network, Fig.\ \ref{fig_primary_secondary_set1} includes the guard zone of
each primary user and active and passive secondary nodes, and Fig.\ \ref{fig_secondary_set1} depicts the active nodes of the secondary network.
\begin{figure}
  \centering
  \includegraphics[scale=0.5]{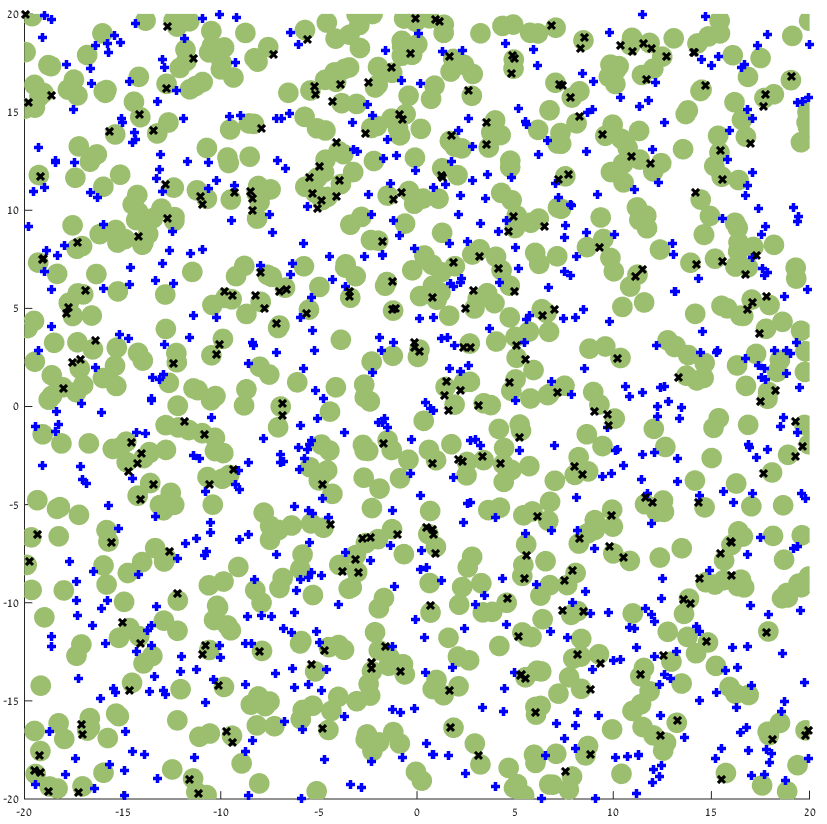}
    \caption{The guard zone of the primary nodes which appear in Fig.~\ref{fig_primary_set1} and a realization of the active and passive secondary nodes. We set $\lambda_s=\lambda_p=50\text{km}^{-2}$, $D_f=50 \text{ m}$. Green disks depict the guard zone of the primary nodes, blue $+$ indicates active secondary nodes and black x indicates passive secondary nodes.}\label{fig_primary_secondary_set1}
 \end{figure}
\begin{figure}
 \centering
  \includegraphics[scale=0.35]{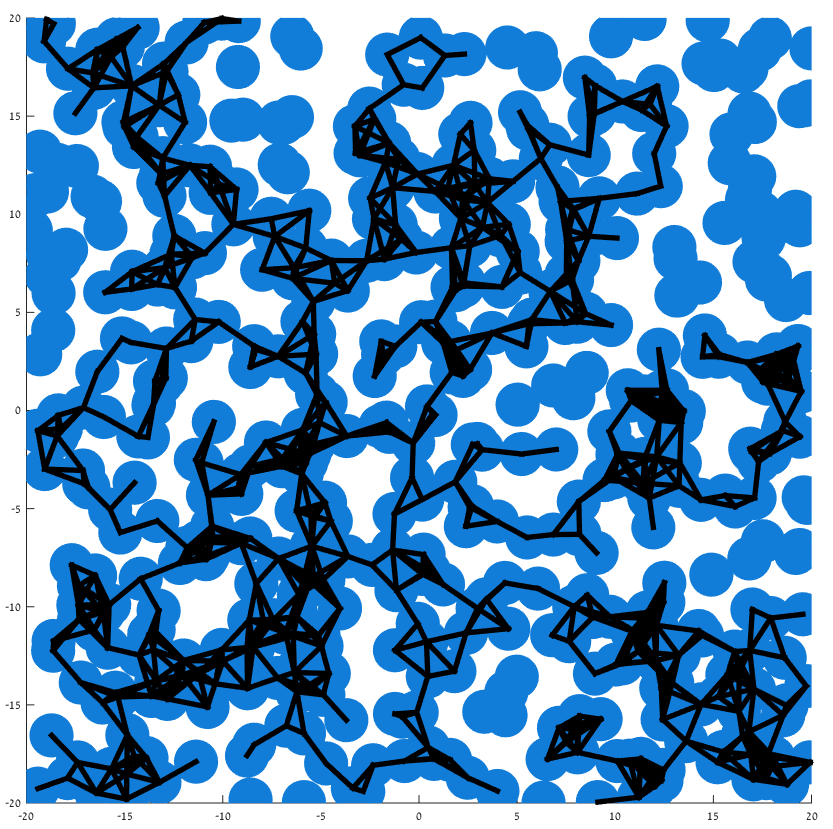}
    \caption{The active secondary nodes of the realization which appears in Fig.~\ref{fig_primary_secondary_set1} and the parameter $d_t=220\text{ m}$. The black lines indicate the disks that compose the largest connected component of the secondary network.}\label{fig_secondary_set1}
  \end{figure}

Define the simultaneous connectivity region to be all density pairs for which there is at least one unbounded connected component in  the primary network as well as an unbounded connected component in the secondary network. Our aim is to analyze the relationship between the densities of the two dimensional PPPs driving the primary and secondary networks and the radii $D_t,d_t$ and $D_f$.
 The analysis conducted in this paper aims to  provide motivations and insights into generalizations of several current applications of stochastic geometry such as routing \cite{1512115,7088595,1522131,6600702}, medium-access control \cite{BaccelliBlaszczyszyn2006,5226963,5457860,6354273,6583304} and interference analysis  in  wireless networks \cite{DousseFranceschettiMacris2006,4453888,HaenggiGantiNOW2009,6155562,HeathKountouris2013}. Since there are many works on these topics we only cite the several works and the references therein. For further reading regarding applications of stochastic geometry in the analysis and design of wireless ad-hoc networks see for example   \cite{HaenggiGantiNOW2009,BaccelliBlaszczyszynNOW20091,BaccelliBlaszczyszynNOW20092,HaenggiAndrewsBaccelli2009,Haenggi2013}.

The rest of this paper is organized as follows. Section \ref{sec:not_def}  presents the heterogeneous network model. We also state definitions and results from Percolation Theory. Section \ref{sec:Connectedness} establishes the connectedness of the simultaneous connectivity region.  In Section \ref{sec:con_comp} we claim and prove that the heterogeneous model is ergodic and  that there cannot exist more than one unbounded connected component in each of the networks. Section \ref{sec:con_comp}
shows that for  densities  greater than the critical densities of the two networks when $D_f=0$ there exists $D_f>0$ such that both networks still contain an unbounded connected component. Section \ref{sec:nec_cond} presents the necessary conditions for  dual connectivity.   Section \ref{sec:suf_cond} covers the sufficient conditions for the simultaneous connectivity of the heterogeneous model. Section \ref{sec:conclusion} concludes the paper.

\section{System Model and Definitions}\label{sec:not_def}

 In this section we present the heterogeneous model. We also state fundamental definitions and results of Percolation Theory which we apply in our analysis of the connectivity of the heterogeneous model.
\subsection{The Heterogeneous Model}
 In this model the primary nodes are distributed  according to a two-dimensional PPP with  density $\lambda_p$.
We assume that the transmission range of primary nodes , $D_t$, is fixed.
 Similarly, the nodes of the secondary network are distributed according to a PPP with density $\lambda_s$ .This PPP is independent of the PPP of the primary network. We also assume that the transmission range of secondary nodes, $d_t$, is fixed. We next provide several definitions  corresponding to the heterogeneous model.

\begin{definition}[Communication Opportunities of Primary Nodes]\label{secon:com_opp}
There is a communication opportunity from node $x_i$ to node $x_j$ in the primary network if
  $\|x_i-x_j\|_{2}\leq D_t$, where $\|\cdot\|_2$ denotes the $L_2$ norm.
 \end{definition}
 \begin{definition}[Communication Opportunities of Secondary Nodes]
There is a communication opportunity from node $z_i$ to node $z_j$  in the secondary network if the following conditions hold:
  \begin{enumerate}
     \item   $\|z_i-z_j\|_{2}\leq d_t$,
     \item there is no primary node $x$ such that $\|x-z_i\|_{2}\leq D_f$,
     \item there is no primary node $x$ such that $\|x-z_j\|_{2}\leq D_f$.
      \end{enumerate}
 \end{definition}
 Consequently,  $D_f$ is the radius of the guard zone of a primary node and  transmitting/receiving secondary nodes.

We only discuss bidirectional links; that is, we say that there is a link between the nodes $z_i$ and $z_j$  if there exists a
communication opportunity from node $z_i$ to node $z_j$ and vice versa.

 \begin{definition}
Let $X_p$ be the set of nodes of the primary network. The connected component of node $x\in X_p$  consists of all nodes in $X_p$ for which there exists a path to $x$  in  $X_p$  such that every two consecutive nodes in the path have a communication opportunity.
Additionally,  an unbounded connected component of the primary network is a connected component
of the primary network which consists of an infinite number of nodes.

The definition of a connected component in the secondary network is similar. \end{definition}

 \begin{definition}
 We define the following simultaneous connectivity regions:
 \begin{itemize}
 \item The simultaneous connectivity region $\mathcal{C}(D_t,d_t,D_f)$  consists of all pairs of densities $(\lambda_p,\lambda_s)$ such that both the primary and secondary networks include a.s.\ at least one unbounded connected component for a given vector of parameters $(D_t,d_t,D_f)$.
\item The simultaneous connectivity region  $\mathcal{C}(D_t,d_t)$
consists of all triples   $(D_{f},\lambda_p,\lambda_s)$ such that both the primary and secondary networks include a.s.\ at least one unbounded connected component for a given vector of transmission radii $(D_t,d_t)$.
 \item The simultaneous connectivity region  $\mathcal{C}(D_t)$
consists of all $4$-tuples $(d_t,D_f,\lambda_p,\lambda_s)$ such that both the primary and secondary networks include a.s.\ at least one unbounded connected component for a given vector of transmission radius $D_t$.
 \item The simultaneous connectivity region  $\mathcal{C}$
consists of all $5$-tuples $(D_{t},d_t,D_f,\lambda_p,\lambda_s)$ such that both the primary and secondary networks include a.s.\ at least one unbounded connected component.
 \end{itemize}
 \end{definition}

 The connectivity of the primary and secondary networks can be studied by representing the two networks by the two independent Boolean models (see the following Section). Nevertheless, our connectivity definitions differ from those of a simple Boolean model (see \cite{MeesterRoy1996}).

 We now provide some definitions which are required for the analysis of the heterogeneous model.
 \subsection{The Gilbert Disk (Boolean) Model}\label{appendix:Boolean_model}
 The Gilbert disk model scatters points in $\mathbb{R}^2$ according to a PPP. Each point in the PPP is assumed to have a fixed radius. In the following we present several definitions related to  the Gilbert disk model.
 \begin{definition}[Point Process]
 Let $\mathcal{B}^2$ be the $\sigma$-algebra of Borel sets in $\mathbb{R}^2$, and let $N$ be the set of all simple counting measures on $\mathcal{B}^2$. Let $\mathcal{N}$  be the $\sigma$-algebra which is generated by the sets
\begin{flalign}
\{n\in N:n(A)=k\},
\end{flalign}
where $A\in\mathcal{B}^2$, and $k$ is an integer. A point process $X$ is  a measurable mapping from a probability space $(\Omega,\mathcal{F},P)$ into $(N,\mathcal{N})$. The distribution of $X$ is denoted by $\mu$ and is defined by $\mu(G)=P(X^{-1}(G))$, for all $G\in\mathcal{N}$. Hereafter, for convenience we refer to $(N,\mathcal{N})$ as $(\Omega,\mathcal{F})$ .
 \end{definition}

\begin{definition}[Gilbert Disk (Boolean) Model]
Suppose that $X$ is a point process. A Gilbert disk (Boolean) model is composed of point process $X$ and a fixed radius $\rho$ such that  each point $x\in X$ is a center of a disk with a fixed radius $\rho$.
\end{definition}
Note that this model  is equivalent to a Boolean model with fixed radii. In this paper we assume that $X$ is a PPP with density $\lambda.$
We denote this Poisson  Gilbert disk (Boolean) model   by $(X,\rho,\lambda)$.

We represent the heterogeneous network by the following Gilbert disk models $(X_p,D_t/2,\lambda_p)$ and $(X_s,d_{t}/2,\lambda_s)$, where $(\Omega_p,\mathcal{F}_p,P_p)$ and $(\Omega_s,\mathcal{F}_s,P_s)$  are the probability spaces of two independent PPPs $X_p$ and $X_s$, respectively. Further, $D_t$ and $d_t$ are the transmission radii in the primary and secondary networks, respectively.

\subsection{Occupied Components}Define $O(z,r)\triangleq\{x\in\mathbb{R}^2:\|x-z\|_2\leq r\}$.

Every Poisson Boolean model $(X,\rho,\lambda)$ partitions $\mathbb{R}^2$ into  two regions, the \textit{occupied region}, which we denote by
\begin{flalign}
\mathcal{O}\triangleq\bigcup_{x\in X}O(x,\rho),
\end{flalign}
and the \textit{vacant region}. The occupied region consists of the points in $\mathbb{R}^2$ that are covered by at least one disk, whereas the vacant region consists of all points in $\mathbb{R}^2$ that are not covered by any disk.

Two nodes $x_1,x_2\in X$ are connected if $O(x_1,\rho)\cap O(x_2,\rho)\neq\emptyset$ (however, in the secondary network we  consider only active nodes). The  connected components in the occupied region are called \emph{occupied components}, whereas the connected components in the vacant region are called \emph{vacant components}. Additionally, for $A\subset \mathbb{R}^2$
\begin{flalign}
W(A) \triangleq &\text{ the union of all occupied components which have}\nonumber\\ &\hspace{1.5cm}\text{a non-empty intersection with }A.
\end{flalign}
 When $A=\{0\}$, we write $W\triangleq W(0)$ and we call $W$ the occupied component of the origin. We use similar definitions for the vacant components which we denote by $V$. Note that only one of the components $V$ and $W$ can be empty. 

 Note that by  definition of the occupied components in the Boolean model, two nodes are connected if the distance between them does not exceed $2\rho$. Therefore, we represent each network by a model in which $\rho$ is half of the transmission radius.
Further, an occupied component/region in the secondary network consists solely of  active secondary nodes of the secondary network.
\subsection{The Critical Probability}
We next define the critical probability of the Gilbert disk model.\begin{definition}[Critical Probability]
Let $d(A)\triangleq\sup_{x,y\in A}|x-y|$ . Denote by $\theta_{\rho}(\lambda)$ the probability that the origin is an element of an unbounded occupied component of the Gilbert disk (Boolean) model $(X,\rho,\lambda)$, that is
\[\theta_{\rho}(\lambda)\triangleq\Pr(d(W)=\infty).\]
The critical density $\lambda_c(2\rho)$ is defined by
\begin{flalign}
\lambda_c(2\rho)\triangleq \inf\{\lambda\geq0: \theta_{\rho}(\lambda)>0\}.
\end{flalign}
\end{definition}
As we next state, the critical probability has a strong tie to the crossing probabilities which we define next.
\subsection{Crossings Probabilities}\label{subsec:crossings}
A continuous path is said to be \textit{occupied} if it lies in an occupied component. An occupied path $\upsilon$ is an occupied $L-R$ crossing of the rectangle $[0,l_1]\times[0,l_2]$ if there exists a segment $\tau$ of $\upsilon$  which is contained in the rectangle $[0,l_1]\times[0,l_2]$ and it also intersects the left and right boundaries of the rectangle. We define an occupied $T-B$ crossing of the rectangle in a similar manner.

Let $\sigma((l_{1},l_{2}),\lambda,L-R)$ be the $L-R$ crossing probability  of the rectangle $[0,l_{1}]\times[0,l_{2}]$, that is, the probability of the existence of an occupied $L-R$ crossing.
Also, let  $\sigma((l_{1},l_{2}),\lambda,T-B)$ denote the $T-B$ crossing probability  of the rectangle $[0,l_{1}]\times[0,l_{2}]$. Suppose that $(X,\lambda,\rho)$ is a two-dimensional Gilbert disk (Boolean)  model with a bounded $\rho$. Then by \cite[Corollary 4.1]{MeesterRoy1996} it follows that for all $k\geq 1$,
\begin{flalign}
\lim_{n\rightarrow\infty}\sigma((kn,n),\lambda,L-R)=\begin{cases}
1, & \text{ if } \lambda>\lambda_c(2\rho)\\
0, & \text{ if } \lambda<\lambda_c(2\rho)
\end{cases}.
\end{flalign}
By symmetry, a similar result holds for the $T-B$ crossings, that is,
\begin{flalign}
\lim_{n\rightarrow\infty}\sigma((n,kn),\lambda,T-B)=\begin{cases}
1, & \text{ if } \lambda>\lambda_c(2\rho)\\
0, & \text{ if } \lambda<\lambda_c(2\rho)
\end{cases}.
\end{flalign}

\subsection{Unit Transformations}
We now define unit transformations of the heterogeneous model. We later use this definition in the discussion of the ergodicity of the heterogeneous model. \begin{definition}
Let  $\mathcal{B}^2$ denote the Borel sets of $\mathbb{R}^2$ and let $\Omega$ be a set of simple counting measures on $\mathcal{B}^2$.
Let $t\in\mathbb{R}^2$ and $T_t:\mathbb{R}^2\rightarrow\mathbb{R}^2$ be defined by the translation $T_{t}x=x+t$. $T_t$ then induces the transformation $S_{t}:\Omega\rightarrow\Omega$ for each $A\in\mathcal{B}^2$ through the equation\footnote{See \cite{MeesterRoy1996} page 22.}
\begin{flalign}\label{def:S_ei}
(S_{t}\omega)(A)=\omega(T_{t}^{-1}A),\: \forall\omega\in\Omega.
\end{flalign}

\end{definition}
Let $\tilde{\Omega}=\Omega_p\times \Omega_s$, $\tilde{\mathcal{F}}=\mathcal{F}_p\times \mathcal{F}_s$ and $\tilde{P}=P_p\times P_s$.
Denote the unit vectors of $\mathbb{R}^2$ by $e_1,e_{2}$.
It follows that $T_{t}$ induces the transformation $\tilde{T}_{t}$ on $\tilde{\Omega}$ where
\begin{flalign}\label{def:t_t}
\tilde{T}_{t}=(S_{t}\omega_p,S_{t}\omega_s).
\end{flalign}
More specifically, $T_{e_{i}}$ induces the transformation $\tilde{T}_{e_i}$ on $\tilde{\Omega}$ where
\begin{flalign}\label{def:t_ei}
\tilde{T}_{e_i}=(S_{e_i}\omega_p,S_{e_i}\omega_s).
\end{flalign}

\section{The Connectedness of the Simultaneous Connectivity Region}\label{sec:Connectedness}

In this section we establish the connectedness of the simultaneous connectivity region.

\begin{proposition}\label{prop_connect}
 The  simultaneous connectivity region $\mathcal{C}$ is connected.
 \end{proposition}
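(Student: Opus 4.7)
The plan is to show that $\mathcal{C}$ is path-connected by exploiting two monotonicities: shrinking the guard radius $D_f$ only aids secondary percolation, and when $D_f=0$ the two Boolean models decouple. Fix two tuples $P_i = (D_{t,i}, d_{t,i}, D_{f,i}, \lambda_{p,i}, \lambda_{s,i}) \in \mathcal{C}$ for $i=1,2$. I will build a continuous path inside $\mathcal{C}$ from $P_1$ to $P_2$ in three legs.

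For the first leg, I continuously shrink $D_f$ from $D_{f,i}$ to $0$ while holding the other four parameters fixed, obtaining a segment from $P_i$ to $P_i' := (D_{t,i}, d_{t,i}, 0, \lambda_{p,i}, \lambda_{s,i})$. The primary network does not involve $D_f$, so its unbounded component is unaffected. For the secondary network I use a realization-wise coupling on a common pair of PPPs $(X_p, X_s)$: a secondary node $z \in X_s$ is active at guard radius $D_f'$ iff $\mathrm{dist}(z, X_p) > D_f'$, so the set of active nodes grows monotonically as $D_f'$ decreases, enlarging the secondary occupied region and preserving any existing unbounded component. Hence $\Pr(\text{secondary percolates at } D_f') \geq \Pr(\text{secondary percolates at } D_{f,i}) = 1$ for every $D_f' \in [0, D_{f,i}]$, so the entire segment lies in $\mathcal{C}$.

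For the second leg, I connect $P_1'$ to $P_2'$ inside the slice $\{D_f=0\} \cap \mathcal{C}$. With $D_f=0$ no secondary node is silenced by any primary node, so the two networks reduce to independent Poisson Boolean models and $(D_t,d_t,0,\lambda_p,\lambda_s) \in \mathcal{C}$ iff $\lambda_p > \lambda_c(D_t)$ and $\lambda_s > \lambda_c(d_t)$. By the standard planar scaling $\lambda_c(r) = \lambda_c(1)/r^2$, the set $\{(D_t, \lambda_p) : D_t>0,\ \lambda_p > \lambda_c(D_t)\}$ is the open strict supergraph of a continuous function on $(0,\infty)$ and is therefore path-connected: an explicit path rises in $\lambda_p$ to a large value, moves horizontally in $D_t$, and comes back down. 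The same holds for $(d_t, \lambda_s)$. The slice is the Cartesian product of these two path-connected regions and hence itself path-connected, so a path from $P_1'$ to $P_2'$ exists inside $\mathcal{C}$. The third leg is the first leg for $i=2$ traversed in reverse; concatenating the three legs yields the desired path.

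The main obstacle is the first leg: secondary percolation depends on $D_f$ through the silencing of nodes by the primary PPP, so monotonicity in $D_f$ is not formally automatic. The coupling on a common realization of $(X_p, X_s)$ is what makes the monotonicity rigorous and transports the almost-sure percolation property from $D_{f,i}$ to every smaller value. Once this coupling is in place, the remainder of the argument is a routine combination of the decoupling at $D_f=0$ and the scaling law for the planar critical density.
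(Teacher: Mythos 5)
Your proof is correct, but it takes a genuinely different route from the paper's. The paper never leaves the positive-$D_f$ regime: it proves Proposition \ref{prop_connect} by an iterated, one-coordinate-at-a-time interpolation --- first showing each fixed-parameter slice $\mathcal{C}(D_t,d_t,D_f)$ is connected via an L-shaped path in the $(\lambda_p,\lambda_s)$-plane (raising $\lambda_s$ by superposition, then moving $\lambda_p$ monotonically), and then freeing $D_f$, $d_t$ and $D_t$ one at a time using exactly the monotone couplings you describe (shrinking the guard zone or enlarging a transmission radius only adds links). You instead retract both endpoints onto the hyperplane $D_f=0$, where the two Boolean models decouple and the slice of $\mathcal{C}$ has the explicit product description $\{\lambda_p>\lambda_c(1)/D_t^2\}\times\{\lambda_s>\lambda_c(1)/d_t^2\}$, which is visibly path-connected by the scaling law. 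Your route gives a cleaner global picture and avoids the superposition step for the density legs; what it gives up is the intermediate statements of independent interest --- connectedness of $\mathcal{C}(D_t,d_t,D_f)$ for a \emph{fixed} parameter vector (Lemma \ref{lemma:connect1}) cannot be read off from your path, since your path necessarily moves $D_f$, $D_t$ and $d_t$. Two points worth making explicit: (1) your argument needs $D_f=0$ to be an admissible parameter value; the paper's definition of $\mathcal{C}$ does not exclude it and the model degenerates gracefully there (a.s.\ no secondary node coincides with a primary node), but there is no easy substitute at small positive $D_f$, where the decoupling fails and Theorem \ref{theorem:existence_connected_comp} only supplies a parameter-dependent guard radius; (2) to place the retracted endpoints in the \emph{open} supergraph you are implicitly using absence of percolation at criticality for the planar Boolean model --- alternatively, note that the $D_f=0$ slice is upward-closed in each density and contains the open supergraph, which already suffices for path-connectedness.
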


 The proof of this theorem relies on the following lemmas.

 \begin{lemma}\label{lemma:connect1}
 The  simultaneous connectivity region $\mathcal{C}(D_t,d_t,D_f)$ is connected for each vector of parameters $(D_t,d_t,D_f)$.
 \end{lemma}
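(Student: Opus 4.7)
My plan is to show that $\mathcal{C}(D_t, d_t, D_f)$ is path-connected by establishing two monotonicity properties via coupling and then connecting any two points through an axis-parallel path. First I would realize the two PPPs jointly so that densities can be compared pointwise: for $\lambda_p \leq \lambda_p'$, take $X_p(\lambda_p) \subseteq X_p(\lambda_p')$ by independent thinning, and analogously $X_s(\lambda_s) \subseteq X_s(\lambda_s')$ for $\lambda_s \leq \lambda_s'$, keeping $X_p$ and $X_s$ independent. Under this coupling the set of active secondary nodes (those outside every primary guard zone of radius $D_f$) is monotone: it grows when $\lambda_s$ grows and when $\lambda_p$ shrinks. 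Since two active nodes are connected by the pure $d_t$-distance rule, a larger active-node set yields a superset of occupied components and hence preserves the existence of an unbounded one.

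From this, two claims follow. \textbf{(i)} If $(\lambda_p, \lambda_s) \in \mathcal{C}(D_t,d_t,D_f)$ and $\lambda_s' \geq \lambda_s$, then $(\lambda_p, \lambda_s') \in \mathcal{C}(D_t,d_t,D_f)$: primary is untouched and the secondary unbounded component persists in the coupling. \textbf{(ii)} If $(\lambda_p, \lambda_s) \in \mathcal{C}(D_t,d_t,D_f)$ and $\lambda_p' \in (\lambda_c(D_t), \lambda_p]$, then $(\lambda_p', \lambda_s) \in \mathcal{C}(D_t,d_t,D_f)$: primary percolates because $\lambda_p' > \lambda_c(D_t)$, while $X_p(\lambda_p') \subseteq X_p(\lambda_p)$ makes the active secondary set only grow, so secondary percolation persists. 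Note that $(\lambda_p, \lambda_s) \in \mathcal{C}$ already forces $\lambda_p > \lambda_c(D_t)$, so the interval in (ii) is nonempty.

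Given two points $(\lambda_p^{(1)}, \lambda_s^{(1)}), (\lambda_p^{(2)}, \lambda_s^{(2)}) \in \mathcal{C}(D_t,d_t,D_f)$ with WLOG $\lambda_p^{(1)} \leq \lambda_p^{(2)}$, set $M = \max(\lambda_s^{(1)}, \lambda_s^{(2)})$ and use the three-segment path
\begin{flalign*}
(\lambda_p^{(1)}, \lambda_s^{(1)}) \to (\lambda_p^{(1)}, M) \to (\lambda_p^{(2)}, M) \to (\lambda_p^{(2)}, \lambda_s^{(2)}).
\end{flalign*}
The two vertical legs lie in $\mathcal{C}$ by (i); the horizontal leg lies in $\mathcal{C}$ by applying (ii) at the anchor $(\lambda_p^{(2)}, M)$ (which is in $\mathcal{C}$ by (i) applied to $(\lambda_p^{(2)}, \lambda_s^{(2)})$), using the bound $\lambda_p^{(1)} > \lambda_c(D_t)$.

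The main obstacle is the coupling argument behind (ii): because the secondary model is a Cox process (a PPP thinned by the primary configuration) rather than a plain Boolean model, one must check carefully that pointwise inclusion of active-node configurations under the joint coupling translates into an inclusion of occupied components, and hence transfers almost-sure existence of an unbounded component. Once one observes that the $d_t$-connection rule is monotone in the node set, this reduces to the routine fact that if $A \subseteq B$ share a distance-based connection rule then every occupied component of $A$ sits inside some occupied component of $B$; the a.s.\ event under the coupling then passes back to the original product measures via marginals.
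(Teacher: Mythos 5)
Your proposal is correct and follows essentially the same route as the paper: an axis-parallel path whose legs are justified by the two monotonicity facts that raising $\lambda_s$ cannot hurt either network and lowering $\lambda_p$ cannot hurt the secondary network, with the primary's percolation along the horizontal leg secured by supercriticality/superposition from the lower endpoint. The only cosmetic differences are that you use three segments and a $\max$ instead of the paper's two segments with a case split, and you phrase the primary's persistence via $\lambda_p'>\lambda_c(D_t)$ where the paper superposes from the endpoint density directly (which sidesteps any worry about strictness at criticality).
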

 \begin{proof}

The proof generalizes  the proof in  \cite[ Theorem 1]{RenZhao2011}. Note that as unlike in \cite{RenZhao2011}, we need to ensure the connectivity of both the primary and the secondary networks. For ease of notation we refer to $\mathcal{C}(D_t,d_t,D_f)$ as $\tilde{\mathcal{C}}$.

Let $(\lambda_{p1},\lambda_{s1})$ and $(\lambda_{p2},\lambda_{s2})$ be two pairs of densities in the simultaneous connectivity region.
We assume without loss of generality that $\lambda_{s_1}\leq \lambda_{s_2}$, and prove that there is a  path from $(\lambda_{p1},\lambda_{s1})$ to $(\lambda_{p2},\lambda_{s2})$ which resides in $\tilde{\mathcal{C}}$. We consider the path that is constructed by the horizonal segment which starts at
$(\lambda_{p1},\lambda_{s1})$ and ends at $(\lambda_{p1},\lambda_{s2})$ and the vertical segment which starts at
$(\lambda_{p1},\lambda_{s2})$ and ends at $(\lambda_{p2},\lambda_{s2})$. We now distinguish between two cases:
  case $(a)$ in which $\lambda_{p1}\leq\lambda_{p2}$, and case $(b)$ in which $\lambda_{p1}\geq\lambda_{p2}$.
We present the proof for case $(b)$; case $(a)$ follows similarly.
\begin{figure}
\centering
\begin{minipage}[b]{0.45\linewidth}
\centering
\includegraphics{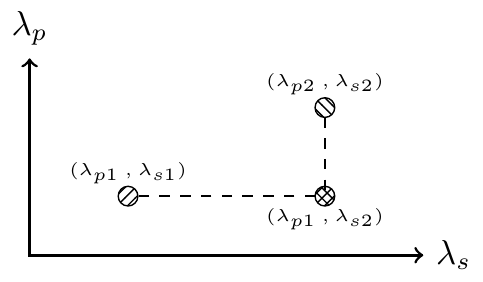}\\(a) $\lambda_{p1}\leq\lambda_{p2}$
\label{fig_crossing}
\end{minipage}
\quad
\begin{minipage}[b]{0.45\linewidth}
\centering
\includegraphics{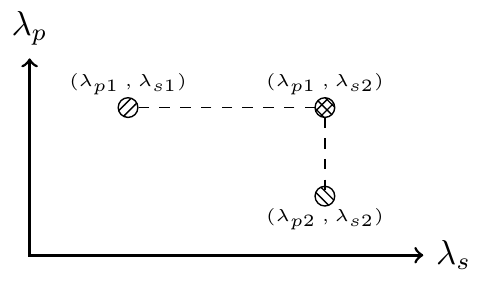}\\(b) $\lambda_{p1}\geq\lambda_{p2}$
\label{fig_crossing}
\end{minipage}
\end{figure}

As mentioned above, we choose the path that consists of two segments. We now prove that each of these segments lies in the simultaneous connectivity region $\tilde{\mathcal{C}}$. First, we show that the segment $(\lambda_{p1},\lambda_s)$ where $\lambda_{s1}\leq\lambda_s\leq \lambda_{s2}$ lies in $\tilde{\mathcal{C}}$. Since the secondary nodes transmit only if they do not interfere with the transmissions of primary nodes, and since  the pair $(\lambda_{p1},\lambda_{s1})$ lies in $\tilde{\mathcal{C}}$, the primary network includes a.s.\ an unbounded connected component for each pair of densities in the segment $(\lambda_{p1},\lambda_s)$. Further since $\lambda_{s1}\leq\lambda_s$, it follows by superposition techniques (see \cite[Theorem 1]{RenZhao2011} and \cite[p. 11]{MeesterRoy1996}) that the secondary network has an unbounded connected component as well. Therefore, the segment  $(\lambda_{p1},\lambda_s)$ where $\lambda_{s1}\leq\lambda_s\leq \lambda_{s2}$
is in $\tilde{\mathcal{C}}$.

Second, we  show that the segment $(\lambda_p,\lambda_{s2})$ where $\lambda_{p2}\leq\lambda_p\leq \lambda_{p1}$ lies in $\tilde{\mathcal{C}}$. One can argue by superposition techniques that for  each of the densities  $(\lambda_{p},\lambda_{s2})$ such that $\lambda_{p2}\leq\lambda_p\leq \lambda_{p1}$, the primary network includes an unbounded connected component a.s.\  Consequently, by  definition \ref{secon:com_opp} of secondary node communication opportunities, if  $(\lambda_{p1},\lambda_{s2})$  lies in $\tilde{\mathcal{C}}$, it follows that  $(\lambda_p,\lambda_{s2})$ lies in $\tilde{\mathcal{C}}$.
\end{proof}

\begin{lemma}\label{lemma_connect2}
The following statements hold for the heterogeneous model:
\begin{enumerate}[(i)]
\item The  simultaneous connectivity region $\mathcal{C}(D_t,d_t)$ is connected for each vector of parameters $(D_t,d_t)$.
\item The  simultaneous connectivity region $\mathcal{C}(D_t)$ is connected for each value of the  parameter $D_t$.
\end{enumerate}
\end{lemma}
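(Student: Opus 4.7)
The plan for both parts is to reduce to the previous lemma (connectedness of $\mathcal{C}(D_t,d_t,D_f)$) by first performing a monotone slide in the extra parameter and then concatenating with a path inside a fixed slice supplied by Lemma \ref{lemma:connect1}.

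For part (i), take $(D_{f,1},\lambda_{p,1},\lambda_{s,1})$ and $(D_{f,2},\lambda_{p,2},\lambda_{s,2})$ in $\mathcal{C}(D_t,d_t)$ and assume without loss of generality $D_{f,1}\geq D_{f,2}$. The key observation is monotonicity in $D_f$: on a common realization of the primary and secondary PPPs, shrinking $D_f$ only enlarges the set of active secondary nodes (the requirement that every primary node lie at distance exceeding $D_f$ becomes strictly easier), while leaving the primary Boolean model and $d_t$ untouched. Therefore an unbounded occupied component of the secondary network at parameter $D_{f,1}$ remains an unbounded occupied component at every smaller $D_f$, so the entire segment $\{(D_f,\lambda_{p,1},\lambda_{s,1}):D_f\in[D_{f,2},D_{f,1}]\}$ lies in $\mathcal{C}(D_t,d_t)$. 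At its endpoint $(D_{f,2},\lambda_{p,1},\lambda_{s,1})$ both $(\lambda_{p,1},\lambda_{s,1})$ and $(\lambda_{p,2},\lambda_{s,2})$ belong to $\mathcal{C}(D_t,d_t,D_{f,2})$, so Lemma \ref{lemma:connect1} provides a path between them inside that slice, and concatenating yields the desired path.

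For part (ii), lift the same idea by one parameter. Given two $4$-tuples in $\mathcal{C}(D_t)$ with (without loss of generality) $d_{t,1}\leq d_{t,2}$, first slide $d_t$ from $d_{t,1}$ up to $d_{t,2}$ holding the other coordinates fixed. On a common PPP realization, enlarging $d_t$ only enlarges each secondary disk, so the secondary occupied region grows and any unbounded occupied component persists; the primary model and the active/inactive classification of secondary nodes, which depends solely on the two PPPs and $D_f$, are unchanged. Hence the whole segment lies in $\mathcal{C}(D_t)$. At the endpoint $(d_{t,2},D_{f,1},\lambda_{p,1},\lambda_{s,1})$, invoke part (i) at the fixed value $d_t=d_{t,2}$ to connect the two $(D_f,\lambda_p,\lambda_s)$-triples inside $\mathcal{C}(D_t,d_{t,2})$, and concatenate.

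The only delicate step is articulating the monotonicity in $D_f$, since it depends on the specific secondary communication-opportunity definition rather than on naive disk inclusion; I expect this to be the main (but very mild) obstacle. Once one notes that the activity predicate is pointwise monotone in $D_f$ on each realization and that the link predicate (mutual distance $\leq d_t$ together with both endpoints active) inherits this monotonicity, the rest is routine concatenation of two monotone segments with the construction from Lemma \ref{lemma:connect1}.
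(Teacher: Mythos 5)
Your proposal is correct and follows essentially the same route as the paper: a realization-wise monotone slide in the extra parameter ($D_f$ decreasing for part (i), $d_t$ increasing for part (ii)), followed by concatenation with a path inside the fixed slice supplied by Lemma \ref{lemma:connect1} (respectively part (i)). The monotonicity observations you flag as the delicate step are exactly the ones the paper invokes, so no further comparison is needed.
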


\begin{proof}
$ $
\begin{enumerate}[(i)]
\item Let $(D_{f1},\lambda_{p1},\lambda_{s1})$ and $(D_{f2},\lambda_{p2},\lambda_{s2})$ be in $\mathcal{C}(D_t,d_t)$. We show that there exists a path in $\mathcal{C}(D_t,d_t)$ which connects these two triples. Suppose without loss of generality that $D_{f1}\geq D_{f2}$. We prove that $(D_{f},\lambda_{p1},\lambda_{s1})\in \mathcal{C}(D_t,d_t)$ for each $D_{f2}\leq D_f\leq D_{f1}$.
Let $(\omega_{p1},\omega_{s1})$ be a realization  of the heterogeneous  model with a pair of densities $(\lambda_{p1},\lambda_{s1})$ and a guard zone radius $D_{f1}$  such that there is at least one unbounded connected component  in both the primary and secondary networks. For each  such  realization, decreasing the guard zone allows more secondary nodes to be active, or at minimum, all the previously active secondary nodes are still active. Additionally,  decreasing the guard zone does not affect the primary network. It follows that decreasing the guard zone  radius does not reduce the number of unbounded connected components in each of the networks. Therefore, if $(D_{f1},\lambda_{p1},\lambda_{s1})\in \mathcal{C}(D_t,d_t)$ then $(D_{f},\lambda_{p1},\lambda_{s1})\in \mathcal{C}(D_t,d_t)$ for each $D_{f2}\leq D_f\leq D_{f1}$. Since $(D_{f2},\lambda_{p1},\lambda_{s1})$ and $(D_{f2},\lambda_{p2},\lambda_{s2})$ are in $\mathcal{C}(D_t,d_t)$, by Lemma \ref{lemma:connect1} the two triples are connected in $\mathcal{C}(D_t,d_t)$.
\item Let $(d_{t1},D_{f1},\lambda_{p1},\lambda_{s1})$ and $(d_{t2},D_{f2},\lambda_{p2},\lambda_{s2})$ be in $\mathcal{C}(D_t)$. We show that there exists a path in $\mathcal{C}(D_t)$ that connects these two $4$-tuples. Suppose without loss of generality that $d_{t1}\leq d_{t2}$. We prove that $(d_t,D_{f1},\lambda_{p1},\lambda_{s1})\in \mathcal{C}(D_t)$ for each $d_{t1}\leq d_t\leq d_{t2}$.
Let $(\omega_{p1},\omega_{s1})$ be a realization  of the heterogeneous  model with a pair of densities $(\lambda_{p1},\lambda_{s1})$, a guard zone radius $D_{f1}$ and a transmission radius $d_t$  such that there is at least one unbounded connected component  in both the primary and secondary networks. For  this realization,  when increasing the transmission radius of the secondary nodes, all the previously connected nodes are still connected, and we can only connect more nodes in the secondary network without affecting the primary one. It follows that increasing the transmission radius of the secondary nodes does not reduce the number of unbounded connected components in each of the networks. Therefore, if $(d_{t1},D_{f1},\lambda_{p1},\lambda_{s1})\in \mathcal{C}(D_t)$ then $(d_{t},D_{f1},\lambda_{p1},\lambda_{s1})\in \mathcal{C}(D_t)$ for each $d_{t1}\leq d_t\leq d_{t2}$. Since $(d_{t2},D_{f1},\lambda_{p1},\lambda_{s1})$ and $(d_{t2},D_{f2},\lambda_{p2},\lambda_{s2})$ are in $\mathcal{C}(D_t)$, by part (i) of this lemma the two $4$-tuples are connected in $\mathcal{C}(D_t)$.
\end{enumerate}
\end{proof}
\begin{proof}[Proof of Proposition \ref{prop_connect}]
Let $(D_{t1},d_{t1},D_{f1},\lambda_{p1},\lambda_{s1})$ and $(D_{t2},d_{t2},D_{f2},\lambda_{p2},\lambda_{s2})$ be in $\mathcal{C}$. We show that there exists a path in $\mathcal{C}$ that connects these two $5$-tuples. Suppose without loss of generality that $D_{t1}\leq D_{t2}$. We can prove similarly to the proof of the second part of  Lemma \ref{lemma_connect2} that $(D_t,d_{t1},D_{f1},\lambda_{p1},\lambda_{s1})\in \mathcal{C}$ for each $D_{t1}\leq D_t\leq D_{t2}$.  Therefore, if $(D_{t1},d_{t1},D_{f1},\lambda_{p1},\lambda_{s1})\in \mathcal{C}$ then $(D_{t},d_{t1},D_{f1},\lambda_{p1},\lambda_{s1})\in \mathcal{C}$ for each $D_{t1}\leq D_t\leq D_{t2}$. Since $(D_{t2},d_{t1},D_{f1},\lambda_{p1},\lambda_{s1})$ and $(D_{t2},d_{t2},D_{f2},\lambda_{p2},\lambda_{s2})$ are in $\mathcal{C}$, by part (ii) of Lemma \ref{lemma_connect2} the two $5$-tuples\ are connected in $\mathcal{C}$.
\end{proof}

\section{The Unbounded Connected Components}\label{sec:con_comp}
In this section we reach several results regarding the number of unbounded connected components in the primary and secondary networks. We also prove that we can always find $D_f>0$ such that both the primary and the secondary networks include an unbounded connected component a.s.\

In order to prove that there is at most one unbounded connected component in the primary and also in the  secondary networks, we first prove that our heterogeneous model is ergodic.
\begin{proposition}\label{theorem:ergodicity}
The heterogeneous model is ergodic.
\end{proposition}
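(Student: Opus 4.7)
The underlying probability space is the product $(\tilde{\Omega},\tilde{\mathcal{F}},\tilde{P})=(\Omega_p\times\Omega_s,\,\mathcal{F}_p\times\mathcal{F}_s,\,P_p\times P_s)$, and the group action to be checked is $\tilde{T}_t(\omega_p,\omega_s)=(S_t\omega_p,S_t\omega_s)$, as defined in \eqref{def:t_t}--\eqref{def:t_ei}. My strategy is to prove the strictly stronger statement that $\tilde{T}_t$ is \emph{mixing}, i.e.\ $\tilde{P}(\tilde A\cap \tilde T_t^{-1}\tilde B)\to \tilde{P}(\tilde A)\tilde{P}(\tilde B)$ as $\|t\|\to\infty$, from which ergodicity of the $\mathbb{R}^2$-action (and in particular of the $\mathbb{Z}^2$-action generated by $\tilde T_{e_1},\tilde T_{e_2}$) follows by a one-line argument. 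Note that no feature of the guard-zone construction is involved at this level: ergodicity is a property of the underlying product Poisson measure together with the shift, and the deterministic activation rule does not enter.

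\textbf{Step 1: mixing of each marginal.} For the primary process $X_p$ and any bounded Borel set $K\subset\mathbb{R}^2$, let $\mathcal{F}_K^p\subset\mathcal{F}_p$ be the sub-$\sigma$-algebra generated by the restriction of the point process to $K$. Poisson scattering on disjoint Borel sets is independent, so once $\|t\|$ is large enough that $K$ and $t+K$ are disjoint, $\mathcal{F}_K^p$ and $S_t^{-1}\mathcal{F}_K^p$ are independent; hence for all $A,B\in\mathcal{F}_K^p$,
\[
P_p\bigl(A\cap S_t^{-1}B\bigr)=P_p(A)P_p(B).
\]
Since $\bigcup_K \mathcal{F}_K^p$ is an algebra that generates $\mathcal{F}_p$, a standard monotone-class / $L^2$ approximation argument upgrades this to
$\lim_{\|t\|\to\infty}P_p(A\cap S_t^{-1}B)=P_p(A)P_p(B)$ for all $A,B\in\mathcal{F}_p$. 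The same reasoning applies to $(X_s,S_t,P_s)$.

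\textbf{Step 2: passage to the product.} For product (cylinder) events $\tilde A=A_p\times A_s$ and $\tilde B=B_p\times B_s$, independence of the two PPPs gives
\[
\tilde P\bigl(\tilde A\cap \tilde T_t^{-1}\tilde B\bigr)=P_p\bigl(A_p\cap S_t^{-1}B_p\bigr)\,P_s\bigl(A_s\cap S_t^{-1}B_s\bigr),
\]
which by Step 1 converges to $P_p(A_p)P_p(B_p)P_s(A_s)P_s(B_s)=\tilde P(\tilde A)\tilde P(\tilde B)$ as $\|t\|\to\infty$. The collection of such rectangles is a $\pi$-system generating $\tilde{\mathcal{F}}$, so the usual approximation (any $\tilde A\in\tilde{\mathcal{F}}$ can be approximated in $\tilde P$-measure by a finite disjoint union of rectangles) extends the mixing identity to all $\tilde A,\tilde B\in\tilde{\mathcal{F}}$.

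\textbf{Step 3: mixing implies ergodicity.} If $\tilde A\in\tilde{\mathcal{F}}$ satisfies $\tilde T_t^{-1}\tilde A=\tilde A$ for all $t$ in the acting subgroup (or even for $t\in\{e_1,e_2\}$ iterated), then $\tilde P(\tilde A\cap \tilde T_t^{-1}\tilde A)=\tilde P(\tilde A)$ for every such $t$; letting $\|t\|\to\infty$ along the lattice and invoking Step 2 yields $\tilde P(\tilde A)=\tilde P(\tilde A)^2$, so $\tilde P(\tilde A)\in\{0,1\}$. This is the definition of ergodicity of $\tilde T$.

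The only real obstacle is the monotone-class / density extension in Steps 1 and 2: one has to be careful that the cylinder algebra generated by the bounded Borel sets is dense in $\tilde{\mathcal{F}}$ in the $\tilde P$-symmetric-difference metric so that the pointwise mixing bound on rectangles lifts to arbitrary measurable events. This is entirely standard (it is the same argument used to prove mixing of a single Poisson Boolean model in \cite[Ch.~2]{MeesterRoy1996}), and no new ingredient is needed beyond independence of disjoint scatterings and independence of the two marginals.
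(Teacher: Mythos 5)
Your proof is correct, but it combines the two networks by a different mechanism than the paper. The paper's proof imports two facts from \cite[Proposition 2.6]{MeesterRoy1996} --- that the shift on one marginal is ergodic and on the other is mixing --- and then invokes \cite[Theorem 6.1]{Petersen1983} (the product of an ergodic system with a weakly mixing system is ergodic) to conclude ergodicity of the product, finally extending from the generators $\tilde T_{e_i}$ to the full $\mathbb{Z}^2$ action exactly as you do in Step 3. You instead prove the strictly stronger statement that the product system is itself mixing: you rederive mixing of each Poisson marginal from independent scattering plus the standard cylinder-algebra approximation, verify the mixing identity on measurable rectangles, and extend by the same approximation. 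This buys self-containedness (no appeal to Petersen) and sidesteps the one genuinely delicate point in the paper's route --- that ergodicity alone is not preserved under products, which is precisely why the paper must record that at least one factor is mixing --- at the cost of a slightly longer argument. Your observation that the guard-zone activation rule is irrelevant here is also correct and consistent with the paper, since the heterogeneous model is a translation-covariant deterministic functional of the two independent PPPs; both proofs implicitly reduce to ergodicity of the underlying product Poisson measure. The approximation steps you flag (density of the cylinder algebra in the symmetric-difference metric, and of finite disjoint unions of rectangles in the product $\sigma$-algebra) are standard and close without issue.
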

\begin{proof}
Let $(\Omega_p,\mathcal{F}_p,P_p)$ and $(\Omega_s,\mathcal{F}_s,P_s)$ be the probability spaces of the primary and secondary networks, respectively.  Define the probability space of the heterogeneous network by  $(\tilde{\Omega},\tilde{\mathcal{F}},\tilde{P})$. Further, let
\begin{flalign}
\tilde{T}(\tilde{\omega})_{e_i}=(S_{e_i}\omega_p,S_{e_i}\omega_s),
\end{flalign}
where $S_{e_i}$ is defined in  (\ref{def:S_ei}).

 Since our statistical model consists of two degenerate Boolean models  we can base our proof on results pertaining to  the Boolean model. It follows from the proof of \cite[proposition 2.6]{MeesterRoy1996} that $(\Omega_p,\mathcal{F}_p,P_s,S_{e_i})$ is ergodic. Additionally, it follows from
the proof of \cite[proposition 2.6]{MeesterRoy1996} that $(\Omega_s,\mathcal{F}_s,P_s,S_{e_i})$ is mixing. Therefore, by  \cite[Theorem 6.1]{Petersen1983}, it follows that $\tilde{T}(\tilde{\omega})_{e_i}$ acts ergodically on $(\tilde{\Omega},\tilde{\mathcal{F}},\tilde{P})$. Let $A\in\tilde{\mathcal{F}}$ be an event which is invariant under all transformations $\{\tilde{T}(\tilde{\omega})_z:z\in\mathbb{Z}^2\}$. By definition, $A$ is also invariant under $\tilde{T}(\omega)_{e_i}$, therefore $P(A)=0$ or $P(A)=1$. Thus, $\{\tilde{T}(\tilde{\omega})_z:z\in\mathbb{Z}^2\}$ acts ergodically on  $(\tilde{\Omega},\tilde{\mathcal{F}},\tilde{P})$.
\end{proof}
From the ergodicity of the model we deduce that the number of unbounded connected components in each of the networks is constant a.s.\
\begin{proposition}\label{theorem:number_unbounded_constant}
The number of unbounded connected components in the primary network and the number of unbounded connected components in the secondary network are constant a.s.\
\end{proposition}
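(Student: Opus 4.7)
The plan is to deduce the claim directly from the ergodicity of the heterogeneous model established in Proposition \ref{theorem:ergodicity}, by showing that the numbers in question are translation-invariant random variables.

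First, I would let $N_p(\tilde{\omega})$ and $N_s(\tilde{\omega})$ denote, respectively, the number of unbounded connected components in the primary network and in the secondary (active-node) network for a realization $\tilde{\omega}=(\omega_p,\omega_s)\in\tilde{\Omega}$. Both take values in $\{0,1,2,\dots\}\cup\{\infty\}$. A brief measurability check is needed: each $N_i$ can be written as a countable limit of functions of the point configuration restricted to bounded boxes (for instance, counting, for each radius $n$, the number of distinct unbounded components that intersect $O(0,n)$, and taking the limit as $n\to\infty$). Since the restriction of a PPP to a bounded set is $\tilde{\mathcal{F}}$-measurable, this shows $N_p$ and $N_s$ are $\tilde{\mathcal{F}}$-measurable random variables on $(\tilde{\Omega},\tilde{\mathcal{F}},\tilde{P})$.

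Next I would verify translation invariance. For any $z\in\mathbb{Z}^2$, the transformation $\tilde{T}_z$ of \eqref{def:t_t} rigidly translates the point configurations of both networks by $z$. Whether a secondary node is active is determined solely by its distance to the primary nodes, and distances are preserved by a rigid translation; therefore the set of active secondary nodes in $\tilde{T}_z\tilde{\omega}$ is precisely the translate by $z$ of the set of active secondary nodes in $\tilde{\omega}$. The same holds trivially for the primary nodes. Since the communication-opportunity graphs in both networks are defined purely through Euclidean distances, their connected components are simply translated by $z$, and in particular the number of unbounded components is unchanged. Thus $N_p(\tilde{T}_z\tilde{\omega})=N_p(\tilde{\omega})$ and $N_s(\tilde{T}_z\tilde{\omega})=N_s(\tilde{\omega})$ for every $z\in\mathbb{Z}^2$.

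Finally, I would invoke Proposition \ref{theorem:ergodicity}: the group $\{\tilde{T}_z:z\in\mathbb{Z}^2\}$ acts ergodically on $(\tilde{\Omega},\tilde{\mathcal{F}},\tilde{P})$. For any Borel set $B\subset\{0,1,2,\dots,\infty\}$, the event $\{N_p\in B\}$ is invariant under $\{\tilde{T}_z\}$, hence has probability $0$ or $1$. It follows that $N_p$ is $\tilde{P}$-a.s.\ equal to a deterministic value in $\{0,1,2,\dots,\infty\}$, and the identical argument gives the same conclusion for $N_s$.

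The only mild obstacle is the role of the guard zone in defining the secondary network: one must observe that activity of a secondary node is a local, distance-based condition, so that translating both configurations simultaneously translates the active set as well. Once this is recorded, the rest is the standard ergodic-theoretic argument, and no further work is required.
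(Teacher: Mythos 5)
Your proposal is correct and follows essentially the same route as the paper: the paper likewise observes that the events $\{N_p=k_p,\,N_s=k_s\}$ are invariant under the translation group and invokes the ergodicity established in Proposition \ref{theorem:ergodicity} to conclude each has probability $0$ or $1$. The extra care you take with measurability and with the translation-covariance of the active secondary set is a welcome elaboration but not a different argument.
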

\begin{proof}
We prove this proposition by a generalization of the proof in \cite[Theorem 2.1]{MeesterRoy1996}.
Let $N_p,N_s$ be the random number of unbounded  connected components of the primary and secondary networks, respectively. The event
$\{N_p=k_p,N_s=k_s\}$ is invariant under the group  $\{\tilde{T}(\tilde{\omega})_z:z\in\mathbb{Z}^2\}$, for all $k_p,k_s\geq 0$. By the ergodicity of the heterogeneous network, the event  $\{N_p=k_p,N_s=k_s\}$ is of probability $0$ or $1$. Therefore, $N_p,N_s$ are constant a.s.\
\end{proof}

We next prove that there exists at most one unbounded connected component in each network a.s.\

\begin{theorem}\label{theorem:one_unbounded_constant}
There is at most one unbounded connected component in the primary network and at most one unbounded connected component in the secondary network.
\end{theorem}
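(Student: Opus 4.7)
The plan is to treat the two networks separately. The primary network, viewed in isolation, is precisely the Poisson Boolean model $(X_p, D_t/2, \lambda_p)$, and the existence and number of its unbounded occupied components depends on nothing but $X_p$. Hence the classical uniqueness theorem for stationary Poisson Boolean models, combined with the ergodicity of $X_p$, applies unchanged and yields $N_p \leq 1$ a.s.\ (see \cite[Proposition 3.6]{MeesterRoy1996} or the analogous Burton--Keane argument adapted to the continuum). No new work is needed for the primary network.

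The secondary network is the substantive case, since the set of active nodes depends jointly on $X_p$ and $X_s$. By Proposition \ref{theorem:number_unbounded_constant}, $N_s$ is a.s.\ equal to some deterministic $k \in \{0,1,2,\dots,\infty\}$, so the task is to derive a contradiction whenever $k \geq 2$. I would follow the standard two-step strategy: a local-modification (``finite energy'') argument to rule out $2 \leq k < \infty$, and a Burton--Keane trifurcation count to rule out $k = \infty$.

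For the modification step, fix a large square $B$ centered at the origin and define $B^+ = B \oplus O(0, D_f + d_t)$. The crucial observation is that activity of secondary nodes in $B$ and their links to potential neighbors in $B^c$ depend only on the Poisson configurations inside $B^+$, which are independent of the configurations outside $B^+$. Assuming $k \geq 2$, one can choose $B$ large enough that with positive probability (computed using only the exterior restriction) at least two distinct infinite active-secondary clusters meet $B$. On this event, the positive-density property of the Poisson processes on $B^+$ implies that, with positive conditional probability, one may realize the following modification: no primary points anywhere in $B \oplus O(0, D_f)$ together with a sufficiently fine grid of secondary points in $B$ that bridges all boundary ingress points. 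This modification merges two or more infinite clusters into a single one, strictly decreasing $N_s$ with positive probability and contradicting its a.s.\ constancy. The case $k = \infty$ is excluded by the analogous trifurcation construction inside $B$ (choose $B$ large enough that at least three infinite clusters enter $B$ with positive probability, then build a trifurcation point inside $B$ by the same local modification); the Burton--Keane counting argument over the $\mathbb{Z}^2$-translates of $B$, using the stationarity of $\{\tilde{T}_z:z\in\mathbb{Z}^2\}$ established in Proposition \ref{theorem:ergodicity}, gives the required contradiction.

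The main obstacle is the coupling between the primary and secondary processes, which prevents a direct appeal to the standard Boolean-model uniqueness theorem for the secondary network. The resolution is the $D_f + d_t$ enlargement from $B$ to $B^+$: within this buffer one controls both the activation status of secondary nodes in $B$ and the existence of secondary links across $\partial B$, and once this buffer is in place the Poisson independence between $B^+$ and its complement restores the finite-energy property on which the classical uniqueness argument rests.
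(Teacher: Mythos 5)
Your proposal matches the paper's proof in all essentials: the primary network is handled by a direct appeal to the classical uniqueness theorem \cite[Theorem 3.6]{MeesterRoy1996}, and the secondary network by combining the a.s.\ constancy of $N_s$ (Proposition \ref{theorem:number_unbounded_constant}) with a local-modification argument in a buffered box (delete all primary points near the box, insert a fine grid of secondary points along its boundary to link every unbounded exterior cluster that approaches it) to exclude $2\le N_s<\infty$, followed by the Burton--Keane-type counting argument via \cite[Lemma 3.2]{MeesterRoy1996} to exclude $N_s=\infty$. The only cosmetic difference is your explicit $D_f+d_t$ enlargement of the box versus the paper's $[-n-D_f,n+D_f]^2$; the underlying idea---restoring the finite-energy property for the coupled process by buffering---is the same.
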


\begin{proof}
Since the secondary nodes transmit only if they do not cause interference to the  primary nodes, we can use \cite[Theorem 3.6]{MeesterRoy1996} to conclude that  the number of unbounded connected components in the primary network, i.e., $N_p$, is at most one a.s.\
We next prove \textit{by contradiction} that the number of unbounded connected components in the secondary network is at most one a.s.\
Note that the proof generalizes the proof in \cite[Proposition 3.3]{MeesterRoy1996} and the proof of \cite[Lemma 2]{RenZhao2011}.

Assume towards contradiction that the number of unbounded components in the secondary network is greater than one, i.e., $N_s\geq 2$.
Suppose that $N_s$ is a finite number such that $N_s\geq 2$. By Proposition \ref{theorem:number_unbounded_constant} it suffices to contradict this assumption by proving that with a positive probability all $N_s$ unbounded connected components can be linked together (by adding secondary nodes and deleting primary nodes) without affecting the number of unbounded connected components $N_p$, in the primary network a.s.\

\begin{figure}
\centering
\includegraphics[scale=1]{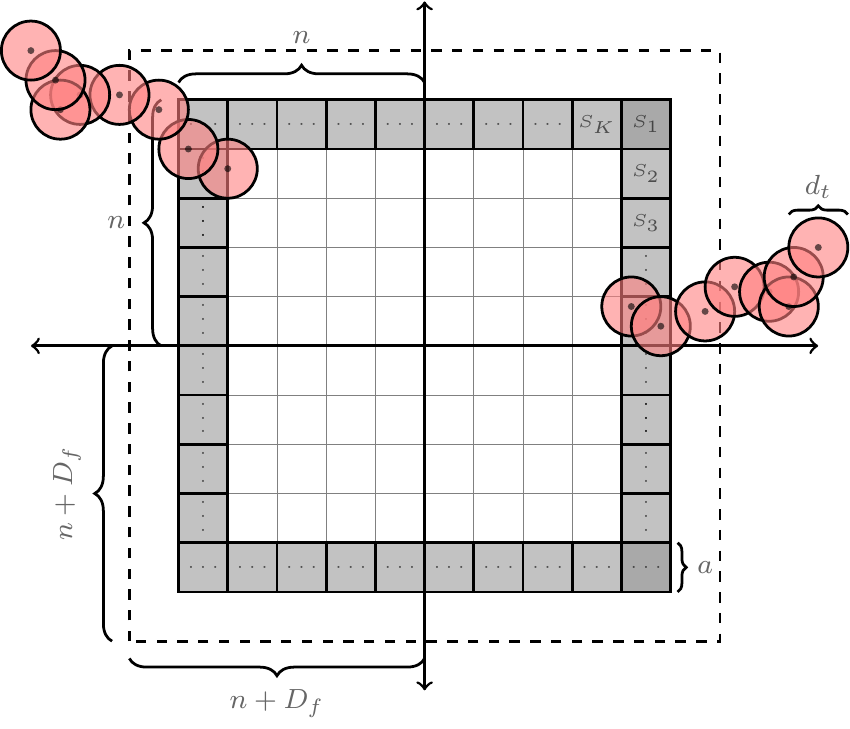}
\caption{Linking two connected (occupied) components of the secondary network. }
\label{fig_one_comp}
\end{figure}

By assumption $N_s$ is finite, therefore, there exists $n\in\mathbb{N} $ such that the box $B=[-n,n]^2$ includes at least one secondary node from each unbounded connected component. For each $A\subset \mathbb{R}^2$, let $C_s[A]$ denote the occupied region formed by the secondary nodes in $A$, that is, $C_s[A]=\cup_{z_i\in A}O(z_i,d_t/2)$. For a box $B$ and some $\epsilon>0$ we define the event $A(B,\epsilon)$ by:
\begin{flalign}
&A(B,\epsilon)\triangleq \{\tilde{\omega}\in\tilde{\Omega}:d(U,B)\leq d_t/2-\epsilon \text{ for any }\nonumber\\
&\hspace{3.6cm} \text{unbounded connected component } U \text{ in } C_s[B^{c}]\}.
\end{flalign}
This event includes all the events in which for each occupied component in the secondary network  outside the box $B$ there exists a secondary node which is within a distance of $\frac{d_t}{2}-\epsilon$ from  $B$.

Partition the box $B$ into squared cells with side length $a=a(B,\epsilon)>0$, and let $\mathcal{S}_a=\{S_1,\ldots,S_K\}$ be the collection of all cells that are adjacent to the boundary of the box $B$. For every box $B$ and $\epsilon>0$ we can find $a=a(B,\epsilon)>0$ and $\eta=\eta(a)>0$ such that for any node $z$ in the secondary network such that $z\notin B$ and $d(z,B)\leq \frac{d_t}{2} -\epsilon$ there exists a square $S\in\mathcal{S}_a$ such that $\sup_{z_B\in S}d(z,z_B)\leq d_t-\eta$. It follows that if we place a secondary node in each of the boundary cells $\mathcal{S}_a$ and if there are no primary nodes in  $\overline{B}=[-n-D_f,n+D_f]^2$, then every unbounded connected component $U$ in $C_s[B^c]$ such that
$d(U,B)\leq \frac{d_t}{2}-\epsilon$ is connected to a node in $\mathcal{S}_a$ as in Fig.\ \ref{fig_one_comp}.

Let $A(a,\eta)$ be the event where there exists at least one secondary node in each square in $\mathcal{S}_a$, and let
$A_{p}(\overline{B})$ be the event where there are no primary nodes in the larger box $B$.
We get that\begin{flalign}
&\Pr(A(B,\epsilon)\cap A(a,\eta)\cap A_{p}(\overline{B})\cap \{N_p=k\})\nonumber\\
&=\Pr(A(B,\epsilon))\Pr( A(a,\eta))\Pr( A_{p}(\overline{B})|A(B,\epsilon)))\nonumber\\
&\quad\cdot\Pr(N_p=k|A(B,\epsilon)\cap A(a,\eta)\cap A(\overline{B})).
\end{flalign}
As in the proof in \cite[Proposition 3.3]{MeesterRoy1996} and the proof in \cite[Lemma 2]{RenZhao2011}, there exists a box $B$, and constants $\epsilon,a,\eta>0$ such that
$\Pr(A(B,\epsilon)),\Pr( A(a,\eta))>0$.  Additionally,
\begin{flalign}
&Pr( A_{p}(\overline{B})|A(B,\epsilon)))\geq Pr( A_{p}(\overline{B}))>0
\end{flalign}
where  the rightmost inequality holds since the box $B$ is bounded.

Since
$\Pr(A(B,\epsilon))\Pr( A(a,\eta))\Pr( A(\overline{B})|A(B,\epsilon)))>0$, by the law of total probability if $Pr(N_p=k)=1$ then $\Pr(N_p=k|A(B,\epsilon)\cap A(a,\eta)\cap A(\overline{B}))>0$, and if $Pr(N_p=k)=0$, then $\Pr(N_p=k|A(B,\epsilon)\cap A(a,\eta)\cap A(\overline{B}))=0$. Therefore by Proposition \ref{theorem:number_unbounded_constant} there are either $0$ or $1$  unbounded connected components in the primary network and either $0,1$ or $\infty$ unbounded connected components in the secondary network.

It remains to be proven that there cannot be an infinite number of unbounded connected components in the secondary network. By implementing \cite[Lemma 3.2]{MeesterRoy1996} as in the proof in \cite[Theorem 3.6]{MeesterRoy1996} it follows that there cannot be an infinite number of unbounded connected components in the secondary networks. We note that   equality (3.59) in \cite[p.\ 68]{MeesterRoy1996} is replaced by inequality since not all the secondary nodes are active. Other than that, all the steps of the proof by contradiction  which appear in (\cite[p.\ 66-68]{MeesterRoy1996})
hold for the secondary network as well.
\end{proof}

\begin{theorem}\label{theorem:existence_connected_comp}
Let $D_t,d_t>0$ be given. For every $\lambda_p>\lambda_c(D_t)$ and $\lambda_s>\lambda_c(d_t)$ there exists $D_f>0$ such that there is an unbounded connected component in the primary network and also an unbounded connected component in the secondary network.
\end{theorem}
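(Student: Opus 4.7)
The plan is to split the two networks. On the primary side, percolation is immediate: the Boolean model $(X_p,D_t/2,\lambda_p)$ does not depend on $X_s$ or on $D_f$, so the hypothesis $\lambda_p>\lambda_c(D_t)$ combined with Theorem \ref{theorem:one_unbounded_constant} already supplies a unique unbounded primary component for every $D_f\geq 0$. All the work is on the secondary side, and there I would run a finite-size criterion / block-renormalisation argument.

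Since $\lambda_s>\lambda_c(d_t)$, Section \ref{subsec:crossings} gives $\sigma((3n,n),\lambda_s,L-R)\to 1$ and $\sigma((n,3n),\lambda_s,T-B)\to 1$ as $n\to\infty$ for the \emph{unthinned} Boolean model $(X_s,d_t/2,\lambda_s)$. First I would pick $\epsilon>0$ small enough that, by the standard block argument of \cite[Chapter~4]{MeesterRoy1996}, the occurrence of the analogous crossings with probability at least $1-\epsilon$ on every translate of these two rectangle shapes forces percolation. Then I would fix $n$ so large that the unthinned crossing probability already exceeds $1-\epsilon/2$.

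The core of the proof is to show that, at this \emph{frozen} $n$, one can then choose $D_f>0$ small enough that the crossings restricted to \emph{active} secondary nodes also occur with probability $\geq 1-\epsilon$. Let $B'$ be the $(d_t/2)$-enlargement of $[0,3n]\times[0,n]$, so that every secondary disk meeting the rectangle has its centre in $B'$, and let $\mathcal{D}$ be the event that some secondary node in $B'$ is deactivated. Using Markov's inequality, the first-moment formula for the PPP $X_s$, and the independence of $X_p$ and $X_s$,
\begin{flalign*}
\Pr(\mathcal{D})\ \leq\ E\!\left[\,\#\{z\in X_s\cap B':\,d(z,X_p)\leq D_f\}\,\right]\ =\ \lambda_s\,|B'|\,\bigl(1-e^{-\lambda_p\pi D_f^2}\bigr),
\end{flalign*}
which tends to $0$ as $D_f\downarrow 0$. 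Choosing $D_f$ so that this bound is $\leq\epsilon/2$ and using $\Pr(E_1\cap E_2^c)\geq \Pr(E_1)-\Pr(E_2)$, the intersection of the unthinned crossing event with $\mathcal{D}^c$ produces an active $L-R$ crossing of $[0,3n]\times[0,n]$ with probability $\geq 1-\epsilon$; the same bound applies to the $T-B$ rectangles and to all their translates.

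The standard renormalisation then assembles these high-probability active crossings into an unbounded active secondary component, which is what is required. The hard part is precisely the central estimate above: the deactivation bound scales with $|B'|=\Theta(n^2)$, so the quantifier order is essential—$n$ has to be frozen first, to secure supercritical crossings in the unthinned model, and only afterwards may $D_f$ be shrunk. Any attempt to send $D_f\to 0$ uniformly in $n$ would fail, which is exactly why the theorem asserts existence of \emph{some} $D_f>0$ rather than of all small $D_f$.
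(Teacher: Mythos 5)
Your proposal is correct, and it follows the same overall strategy as the paper: treat the primary network as an ordinary supercritical Boolean model, and for the secondary network discretize via crossing events of rectangles, first freezing the scale using supercriticality of the unthinned model $(X_s,d_t/2,\lambda_s)$ and only then shrinking $D_f$ so that interference is negligible at that fixed scale. The quantifier order you emphasize at the end is exactly the crux of the paper's argument as well. The two executions differ in their details, though. First, the paper builds a bond model (an edge is open when one $L$--$R$ and two $T$--$B$ crossings by active secondary nodes exist in prescribed rectangles) and closes the argument with an explicit Peierls count of closed dual circuits, yielding the numerical threshold $q<\bigl(\tfrac{11-2\sqrt{10}}{27}\bigr)^4$; you instead invoke the generic block-renormalisation/$k$-dependent percolation criterion, which is less explicit but equally valid since the dependence range of your crossing events is uniformly bounded relative to the rectangle size. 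Second, and more substantively, your deactivation estimate is cruder but cleaner: you bound the probability that \emph{any} secondary node in the $(d_t/2)$-enlarged box is deactivated via Campbell's formula and Markov's inequality, obtaining $\lambda_s|B'|\bigl(1-e^{-\lambda_p\pi D_f^2}\bigr)\to 0$ as $D_f\downarrow 0$, whereas the paper conditions on the crossings existing and bounds $\Pr(B_e=0\mid A_e=1)$ by applying Jensen's inequality to the expected interference area of the crossing nodes. Your route sacrifices a constant (you demand that no node at all is thinned, not merely the crossing nodes) but avoids the conditioning manipulations in the paper's estimate; both bounds vanish as $D_f\to 0$ at fixed scale, which is all the theorem requires.
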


\begin{proof}

First, by the definition of our model if $\lambda_p>\lambda_c(D_t)$ then the primary network includes an unbounded connected component a.s. We proceed to consider the connectivity of the secondary network. For the sake of this proof we discretize the continuous model onto a bond percolation model $\mathcal{L}$ in $\mathbb{R}^2$ such that if there is percolation in the secondary networks in the discrete model it is imperative there is an unbounded connected component in the continuous model.

\begin{figure}
\centering
\includegraphics[scale=0.7]{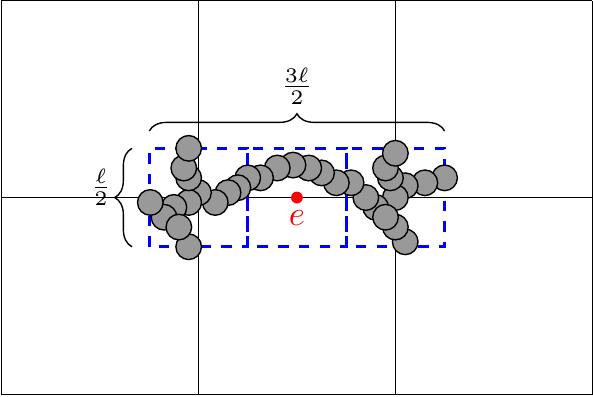}
\caption{Discretization of the continuous model. The edge $e$ is open since the three crossings exist.}
\label{fig_crossing}
\end{figure}

 Set $\ell\in\mathbb{R}^+$ such that $\ell>2(2D_t+d_{t})$ and place the vertices of the planar graph in $(\ell k,\ell m)$ where $k,m\in\mathbb{Z}$. Each vertex $(\ell k,\ell m)$  is connected to the vertices $(\ell (k\pm1),\ell m)$, $(\ell k,\ell (m\pm1))$. The set of all edges is denoted by $E$. Denote the middle point of the edge $e\in E$ by $(x_e,y_e)$. An edge $e\in E$ is said to be open if the following conditions hold:
\begin{enumerate}
\item
\begin{enumerate}
\item There is an L-R occupied crossing of secondary nodes in the rectangle $\left[x_e-\frac{3\ell}{4},x_e+\frac{3\ell}{4}\right]\times \left[y_e-\frac{\ell}{4},y_e+\frac{\ell}{4}\right]$ in \index{This set of conditions} $\mathcal{B}(X_s,d_t/2,\lambda_s)$.
\item There are two T-B occupied crossings of secondary nodes, one crossing in the rectangle
$\left[x_e+\frac{\ell}{4},x_e+\frac{3\ell}{4}\right]\times \left[y_e-\frac{\ell}{4},y_e+\frac{\ell}{4}\right]$
and the other in the box
$\left[x_e-\frac{\ell}{4},x_e-\frac{3\ell}{4}\right]\times \left[y_e-\frac{\ell}{4},y_e+\frac{\ell}{4}\right]$
 in $\mathcal{B}(X_s,d_t/2,\lambda_s)$.
\end{enumerate}
\item There is no primary node within a distance of $D_f$ from any secondary node which composes one of the three crossings.
\end{enumerate}

 For each edge $e\in E$ we define two (dependent) binary random variables $A_e$ and $B_e$. We set $A_e=1$ if condition (1) holds and $A_e=0$ otherwise. Similarly, $B_e=1$ if $A_e=1$ and condition (2) holds and $B_e=0$ otherwise. The state of an edge $e$ is denoted by $C_e=A_eB_e$, where $'1'$ stands for an open edge and $'0'$ a closed one.
Note that the states of the edges are dependent; however, for $\ell>2(2D_t+d_{t})$ the state of an edge  only depends  on the states of its six neighboring edges.

 Let  $(0,0)$ be the origin vertex. Denote by $C(0)$ the set of active secondary vertices in $\mathcal{L}$ which are connected to the origin of $\mathcal{L}$ by open paths. The number of vertices in $C(0)$ is denoted by $|C(0)|$.
By  Proposition \ref{theorem:ergodicity} and Theorem \ref{theorem:one_unbounded_constant} it suffices to prove that $\Pr(|C(0)|=\infty)>0$ to prove that there is an unbounded connected component in the secondary network a.s.

We prove that  $\Pr(|C(0)|=\infty)>0$ by using ``Peierls argument" \cite[pp. 16]{Grimmett1999}.
Let $\mathcal{E}_e$ be the event such that the edge $e$ is closed, that is,
\begin{flalign}
\mathcal{E}_e(\ell,D_f)=\{A_e=0\}\cup\{B_e=0\}.
\end{flalign}
Therefore,\begin{flalign}
q(\ell,D_f)\triangleq\Pr(\mathcal{E}_e(\ell,D_f))\leq \Pr(A_e=0)+\Pr(B_e=0).
\end{flalign}

Let $\mathcal{L}'$ be the dual graph of $\mathcal{L}$.
Then,\begin{flalign}
&\Pr(|C(0)|=\infty)=\Pr(\exists \text{ an infinite open path in }  \mathcal{L})\nonumber\\
&=1-\Pr(\exists \text{ a closed circuit in }  \mathcal{L}' \text{ which contains the origin in its interior})
\end{flalign}
Denote by $\rho(n)$ the number of circuits of length $n$ which contain the origin in their interior. Then, $\rho(n)$ is upper bounded by \cite[pp. 15-18]{Grimmett1999}
\begin{flalign}
\rho(n)\leq 4n3^{n-2}.
\end{flalign}
Let $M(n)$ be the number of closed circuits of length $n$ which contain the origin in their interior, and let $q=q(\ell,D_f)$. Then,
\begin{flalign}
&\Pr(\exists \text{ a closed circuit in }  \mathcal{L}' \text{ which contains the origin in its interior})\nonumber\\
&=\Pr(M(n)\geq 1 \text{ for some } n)\leq\sum_{n=1}^{\infty}\Pr(M(n)\geq 1)\leq\sum_{n=1}^{\infty}\rho(n)q^{n/4}\nonumber\\
&\leq\sum_{n=1}^{\infty}4n3^{n-2}q^{n/4}=\frac{4q^{1/4}}{3}\sum_{n=1}^{\infty}n\left(3q^{1/4}\right)^{n-1}=\frac{4q^{1/4}}{3(1-3q^{1/4})^2}.
\end{flalign}
It follows that $\Pr(|C(0)|=\infty)>0$ if $q<\left(\frac{11-2\sqrt{10}}{27}\right)^4$.

We next show that there are $\ell$ and $D_f$ such that $q(\ell,D_f)<\left(\frac{11-2\sqrt{10}}{27}\right)^4$.
Let $\alpha$ be the (random) area in which primary nodes interfere with communication of secondary nodes in the L-R or one of the T-B crossings. By definition it follows that $0\leq\alpha\leq \left(\frac{3\ell}{2}+2D_f+d_t\right)\left(\frac{\ell}{2}+2D_f+d_t\right)\triangleq S_{p,\max}(\ell)$.
Moreover, since not all the nodes that form the L-R and T-B crossings are essential for the existence of these crossings, it follows that
\begin{flalign}
\Pr(B_e=1|A_e=1,\alpha)
\geq e^{-\lambda_p\alpha}.
\end{flalign}
Therefore,
 \begin{flalign}
\Pr(B_e=0|A_e=1)&=1-\Pr(B_e=1|A_e=1)\nonumber\\
&\leq 1-\int_0^{S_{p,\max}(\ell)}f(\alpha|A_e=1)e^{-\lambda_p\alpha}d\alpha.
\end{flalign}
The function $e^{-\lambda_p\alpha}$ is convex with respect to $\alpha$, therefore, by Jensen's inequality:
\begin{flalign}
\Pr(B_e=0|A_e=1) \leq 1-e^{-\lambda_p E(\alpha|A_e=1)}.
\end{flalign}
Next, we upper bound $E(\alpha|A_e=1)$.
Let $K$ be the (random) number of secondary nodes in the box $\left[x_e-\frac{3\ell+2d_{t}}{4},x_e+\frac{3\ell+2d_{t}}{4}\right]\times \left[y_e-\frac{\ell+2d_{t}}{4},y_e+\frac{\ell+2d_{t}}{4}\right]$, and
denote by $\boldsymbol z_{k}=(z_{1},\dots,z_{k})$  the vector of the (random) positions of these $K$ secondary nodes.
Also, let $I(\boldsymbol z_{k})$ be the area of the region $\bigcup_{i=1}^k O(z_{i},D_f)$. Given that $K=k,$ one has that $\alpha\leq I(\boldsymbol z_{k})$ since the  region $\bigcup_{i=1}^k O(z_{i},D_f)$ may include secondary nodes which are not part of the crossings. Also, by definition $I(\boldsymbol z_{k})$ is upper bounded by $k\pi D_f^2$. We proceed to bounding $E(\alpha|A_e=1)$,
\begin{flalign}
&E(\alpha|A_e=1)=\int_0^{S_{p,\max}(\ell)}\alpha f(\alpha|A_e=1)d\alpha\nonumber\\
&=\int_0^{S_{p,\max}(\ell)}\alpha \sum_{k=1}^{\infty} \Pr(K=k|A_e=1) f(\alpha|A_e=1,K=k)d\alpha\nonumber\\
&= \sum_{k=1}^{\infty} \Pr(K=k|A_e=1) \int_0^{S_{p,\max}(\ell)}\alpha f(\alpha|A_e=1,K=k)d\alpha\nonumber\\
&\stackrel{(a)}{\leq}\sum_{k=1}^{\infty} \Pr(K=k|A_e=1) \int_0^{S_{p,\max}(\ell)}k\pi D_f^2\cdot f(\alpha|A_e=1,K=k)d\alpha\nonumber\\
&= \sum_{k=1}^{\infty}\Pr(K=k|A_e=1)k\pi D_f^2\nonumber\\
&= \frac{1}{\Pr(A_e=1)}\sum_{k=1}^{\infty}\Pr(K=k,A_e=1)k\pi D_f^2\nonumber\\
&\leq\frac{1}{\Pr(A_e=1)}\sum_{k=1}^{\infty}\Pr(K=k)k\pi D_f^2\nonumber\\
&=\pi D_f^2\frac{\sum_{k=1}^{\infty}\Pr(K=k)k}{\Pr(A_e=1)}\nonumber\\
&\stackrel{(b)}{=} \pi D_f^2\frac{ \lambda_s }{\Pr(A_e=1)}\cdot\frac{(3\ell+d_t)(\ell+d_t)}{4},
\end{flalign}
where $(a)$ follows since given that $K=k$, the interference area $\alpha$ is upper bounded by  $k\pi D_f^2$  and $(b)$ follows since $K$ is a Poisson random variable with density $\lambda_s\frac{(3\ell+d_t)(\ell+d_t)}{4}$.
Thus,
\begin{flalign}
\Pr(B_e=0|A_e=1) \leq 1-e^{-\pi\lambda_p D_f^2\frac{ \lambda_s }{\Pr(A_e=1)}\cdot\frac{(3\ell+d_t)(\ell+d_t)}{4}}.
\end{flalign}
By the law of total probability,
\begin{flalign}
&q(\ell,D_f)=\Pr(\mathcal{E}_e(\ell,D_f))\leq \Pr(A_e=0)+\Pr(B_e=0)\nonumber\\
&\leq \Pr(A_e=0)+\Pr(A_e=0)\Pr(B_e=0|A_e=0)\nonumber\\
&\qquad+\Pr(A_e=1)\Pr(B_e=0|A_e=1)\nonumber\\
&\leq 2\Pr(A_e=0)+\Pr(B_e=0|A_e=1)\nonumber\\
&\leq 2\Pr(A_e=0)+1-e^{-\pi\lambda_p D_f^2\frac{ \lambda_s }{\Pr(A_e=1)}\cdot\frac{(3\ell+d_t)(\ell+d_t)}{4}}.
\end{flalign}
Since $\lambda_s>\lambda_c\left(\frac{d_t}2\right)$, $\Pr(A_e=0)$ vanishes as $\ell$ tends to infinity. Therefore, for every $\epsilon>0$ there exists $\ell\in\mathbb{R}$ such that
$\Pr(A_e=0)<\epsilon/3$. Furthermore, one can choose $D_f>0$ such that $1-e^{-\pi\lambda_p D_f^2\frac{\lambda_s }{\Pr(A_e=1)}\cdot\frac{(3\ell+d_t)(\ell+d_t)}{4}}<\epsilon/3$.
Consequently, there are $\ell,D_f>0$ such that $q<\left(\frac{11-2\sqrt{10}}{27}\right)^4$.
\end{proof}

\section{Necessary Conditions for Simultaneous Connectivity }\label{sec:nec_cond}

In this section we state the necessary conditions for simultaneous percolation in both the primary and secondary networks. These two conditions are found by implementing two different methods.
The first condition, stated in Theorem \ref{theorem:necessary_cond1}, is found by considering  the fact
that there cannot exist  both an unbounded vacant component and an occupied component in a Gilbert disk (Boolean) model a.s.\ The second condition, stated in Theorem \ref{theorem:necessary_cond2}, is found by discretizing the continuous model onto a site percolation model and bounding the connected component a.s.\
\begin{theorem}\label{theorem:necessary_cond1}
Suppose that $2D_f > d_t$, then
\begin{flalign}\label{eq:necessary_cond1}
&\lambda_s > d_t^{-2}\lambda_c(1)\nonumber\\
&\lambda_p > D_t^{-2}\lambda_c(1)\nonumber\\
&\lambda_p <\left(4D_f^2-d_t^2\right)^{-1}\lambda_c(1)
\end{flalign}
are necessary  conditions for  simultaneous percolation in both networks.
\end{theorem}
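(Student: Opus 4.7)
I would prove the three necessary inequalities separately, with the first two being essentially immediate consequences of univariate Gilbert disk percolation and only the third requiring a substantive argument.

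\textbf{The bounds on $\lambda_p$ and $\lambda_s$ from below.} The primary network's connectivity depends only on pairwise distances in $X_p$ and is unaffected by the secondary process, so it coincides with the stand\-alone Gilbert disk model $(X_p, D_t/2, \lambda_p)$, which percolates only if $\lambda_p > \lambda_c(D_t)$. By the standard scaling $\lambda_c(r) = r^{-2}\lambda_c(1)$, this gives $\lambda_p > D_t^{-2}\lambda_c(1)$. For the secondary density, note that the set of active secondary nodes is a deterministic thinning of $X_s$, so the secondary network is dominated (edge-wise) by the $D_f = 0$ Gilbert disk model $(X_s, d_t/2, \lambda_s)$; percolation of the former therefore requires $\lambda_s > \lambda_c(d_t) = d_t^{-2}\lambda_c(1)$.

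\textbf{The upper bound $\lambda_p < (4D_f^2 - d_t^2)^{-1}\lambda_c(1)$.} This is the main content. Define the radius
\[
R \triangleq \tfrac{1}{2}\sqrt{4D_f^2 - d_t^2},
\]
which is positive by the hypothesis $2D_f > d_t$, and introduce the auxiliary Gilbert disk model $(X_p, R, \lambda_p)$, whose critical density is $\lambda_c(2R) = (4D_f^2 - d_t^2)^{-1}\lambda_c(1)$. The crux is a geometric observation: every secondary edge lies in the vacant region of this auxiliary model. Indeed, if $z_1, z_2$ are active secondary nodes with $\|z_1 - z_2\|_2 = L \leq d_t$, then both lie at distance $> D_f$ from every primary node $x$. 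Placing $z_1, z_2$ symmetrically on the horizontal axis and minimizing the distance from $x$ to the segment $\overline{z_1 z_2}$ subject to the two endpoint constraints yields a minimum of $\sqrt{D_f^2 - L^2/4} \geq \sqrt{D_f^2 - d_t^2/4} = R$, so the whole segment avoids every disk $O(x, R)$.

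\textbf{Conclusion of the upper bound and the main obstacle.} The vacant/occupied dichotomy for the two-dimensional Poisson Gilbert disk model (see \cite[Chapter 4]{MeesterRoy1996}) asserts that whenever the occupied region percolates, the vacant region contains no unbounded component. Applied to $(X_p, R, \lambda_p)$: if $\lambda_p \geq \lambda_c(2R)$, the vacant region of this model has no unbounded cluster. By the geometric claim, any unbounded secondary connected component would produce an unbounded continuous path in that vacant region, a contradiction. Hence $\lambda_p < (4D_f^2 - d_t^2)^{-1}\lambda_c(1)$ is necessary. The main (modest) obstacle is the planar-geometry minimization underlying the geometric claim and the clean invocation of the 2D vacant/occupied dichotomy; both are standard but must be stated carefully, and the hypothesis $2D_f > d_t$ is exactly what is required for $R$ to be positive and for the argument to have any content.
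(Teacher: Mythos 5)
Your proof is correct and follows essentially the same route as the paper: the first two bounds come from monotonicity/thinning of the interference-free Gilbert disk models, and the third from the auxiliary Boolean model $(X_p,\sqrt{D_f^2-d_t^2/4},\lambda_p)$, whose vacant region must contain an unbounded component whenever the secondary network percolates, forcing $\lambda_p<\lambda_c\bigl(\sqrt{4D_f^2-d_t^2}\bigr)=(4D_f^2-d_t^2)^{-1}\lambda_c(1)$. Your explicit minimization showing that every secondary edge stays at distance at least $R=\tfrac12\sqrt{4D_f^2-d_t^2}$ from every primary node is a cleaner formalization of the geometric step the paper only indicates via its figure of two intersecting guard disks.
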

\begin{proof}

\begin{figure}
\centering
\includegraphics[scale=1]{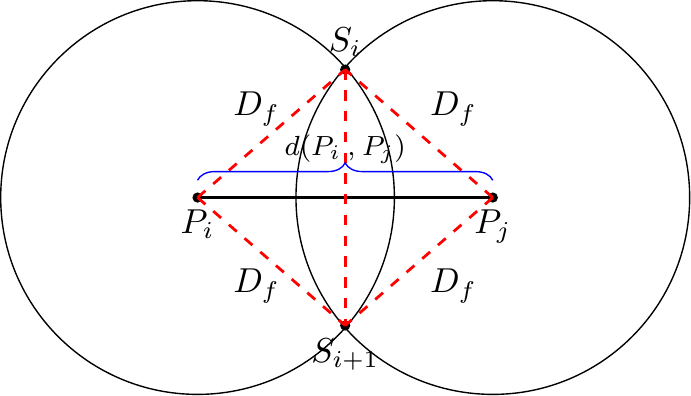}
\caption{The guard zone of two adjacent primary nodes.}
\label{fig_necessary_cond}
\end{figure}
Since interference between the networks can only reduce the connectivity, the first two conditions of Eq.\ \ref{eq:necessary_cond1} follow from the fact that there is no percolation in either of the networks unless the density of the nodes is greater than the critical density, assuming no interference.
The proof of the third condition of Eq.\ \ref{eq:necessary_cond1} is inspired by the proof of Theorem 2.2 in \cite{RenZhao2011}. In this proof we find a PPP $X_p$ with density and radius $\rho$ such that if there is no unbounded vacancy region generated by $(X_p,\rho,\lambda_p)$  there is no unbounded connected component in the secondary network. A simple example as in \cite[Fig.\ 13]{RenZhao2011} shows that the existence of a vacancy region in the PPP $\mathcal{}(X_p,D_f,\lambda_p)$ is not a necessary condition. That is, there
may exist unbounded connected components in both the primary and secondary networks even if there is no unbounded vacancy region in $(X_p,D_f,\lambda_p)$.

Denote by $V_{\rho}$ the largest vacancy component of the PPP $(X_p,\rho,\lambda_p)$. 
Our goal is to find $\rho$ such that if $V_{\rho}$ is bounded  there cannot be $L-R$ or $T-B$ crossings in the secondary network.

Suppose that two disks of $(X_p,\rho,\lambda_p)$ intersect, and let  $P_i$ and $P_j$ be the centers of these disks. The shortest crossing option for secondary nodes  is through the points $S_i$ and $S_{i+1}$. This distance is the shortest when the distance between $P_i$ and $P_j$ is $2\rho$. However, suppose that $\rho < \sqrt{D_f^2-\frac{d_t^2}{4}}$ then from Fig.\ \ref{fig_necessary_cond}, one can see that if there is no unbounded vacant component in $(X_p,\rho,\lambda_p)$, an unbounded component does not exist in the secondary network a.s.\

Therefore,
\begin{flalign}
\lambda_{p}<\lambda_c\left(2\sqrt{D_f^2-\frac{d_t^2}{4}}\right) =\frac{\lambda_c(1)}{4D_f^2-d_t^2}.
\end{flalign}
\end{proof}

Another set of conditions for  simultaneous connectivity is obtained by discretization onto a site percolation in which each site has eight neighbors. This set of conditions explores the relationship between the    densities of the primary secondary networks under the simultaneous percolation regime.

\begin{theorem}\label{theorem:necessary_cond2}
Let $n_p\triangleq\Bigl\lceil\frac{\sqrt{2}d_t}{D_f}\Bigr\rceil^2$.
Denote by $p_8$ the critical probability of  site percolation with eight neighbors. Then
\begin{flalign}\label{eq:necessary_cond2}
&\lambda_s > d_t^{-2}\lambda_c(1),\nonumber\\
&\lambda_p > D_t^{-2}\lambda_c(1),\nonumber\\
&\lambda_p<-\frac{n_p}{d_t^2}\ln\left(1-\left(1-e^{-\lambda_s d_t^2}\right)^{-1/n_p}\left(1-e^{-\lambda_sd_t^2}-p_8\right)^{1/n_p}\right)
\end{flalign}
are necessary conditions for simultaneous percolation in both networks.
\end{theorem}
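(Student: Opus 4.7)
The first two inequalities of \eqref{eq:necessary_cond2} are necessary simply because each network, considered in isolation and ignoring all interference, must exceed its critical density to percolate, exactly as in the proof of Theorem \ref{theorem:necessary_cond1}. My plan for the coupled third inequality is to dominate the secondary network from above by an i.i.d.\ $8$-neighbor site percolation process on $\mathbb{Z}^2$ and then invoke the definition of $p_8$.

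Concretely, I would partition $\mathbb{R}^2$ into \emph{large} cells $\{C_{k,m}\}_{k,m\in\mathbb{Z}}$ of side length $d_t$, and further subdivide each large cell into a $\sqrt{n_p}\times\sqrt{n_p}$ array of \emph{small} cells of side length $d_t/\sqrt{n_p}$. The choice $n_p=\lceil\sqrt{2}\,d_t/D_f\rceil^2$ is exactly what forces the diameter $\sqrt{2}\,d_t/\sqrt{n_p}$ of each small cell to be at most $D_f$. Declare $C_{k,m}$ to be \emph{good} if it contains at least one secondary node and at least one of its small cells is empty of primary nodes. Because the two PPPs are independent and their restrictions to disjoint regions are independent, the good/bad indicators of distinct large cells are i.i.d.\ Bernoulli with success probability
\[p=(1-e^{-\lambda_s d_t^2})\Bigl(1-\bigl(1-e^{-\lambda_p d_t^2/n_p}\bigr)^{n_p}\Bigr).\]

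The core of the argument consists of two geometric observations. First, any active secondary node $z\in C_{k,m}$ witnesses the goodness of $C_{k,m}$: the small cell containing $z$ has diameter at most $D_f$, so if a primary node lay inside that small cell it would sit within distance $D_f$ of $z$, contradicting the activity of $z$. Second, if two active secondary nodes $z,z'$ satisfy $\|z-z'\|_2\leq d_t$, they must lie in the same large cell or in two $8$-adjacent large cells, since any two points inhabiting large cells at Chebyshev distance at least two are at Euclidean distance strictly greater than $d_t$. Together these imply that any unbounded connected component of the secondary network produces an infinite simple path of active secondary nodes whose hosting large cells form an infinite $8$-connected cluster of good cells.

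Existence of an infinite cluster in the i.i.d.\ $8$-neighbor site percolation hence forces $p>p_8$. Rearranging this inequality (equivalently, $(1-e^{-\lambda_p d_t^2/n_p})^{n_p}<(1-e^{-\lambda_s d_t^2}-p_8)/(1-e^{-\lambda_s d_t^2})$, then taking the $n_p$-th root and solving for $\lambda_p$) delivers the third line of \eqref{eq:necessary_cond2}. I expect the only real subtlety to be the geometric calibration: the side length $d_t$ of the large cells is tuned so that secondary-network adjacency implies $8$-adjacency of cells, while the sub-partition into $n_p$ small cells of diameter at most $D_f$ is precisely what promotes the local event ``good'' to a necessary condition for the presence of an active secondary node in the cell.
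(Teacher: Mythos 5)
Your proof is correct and follows essentially the same route as the paper: discretize onto an i.i.d.\ eight-neighbor site model with cells of side $d_t$ subdivided into $n_p$ sub-cells of diameter at most $D_f$, note that an unbounded secondary component forces an infinite $8$-connected cluster of ``good'' cells so that the good-cell probability $\left(1-e^{-\lambda_s d_t^2}\right)\left(1-\left(1-e^{-\lambda_p d_t^2/n_p}\right)^{n_p}\right)$ must exceed $p_8$, and rearrange (the paper states this contrapositively in terms of closed cells). Your explicit geometric justifications --- that an active secondary node witnesses the goodness of its cell, and that secondary adjacency implies $8$-adjacency of hosting cells --- are left implicit in the paper, and your reading of the sub-partition ($n_p$ sub-cells of side $d_t/\sqrt{n_p}$) is the one consistent with the exponent $n_p$ appearing in the closed-cell probability.
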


\begin{proof}
  As previously, the first two conditions of Eq. (\ref{eq:necessary_cond2})
are rudimentary conditions for  the connectivity of the primary and the secondary  networks. To establish the third condition of Eq.\ (\ref{eq:necessary_cond2}) we discretize the continuous model onto a discrete model. We choose  a model in which non-occurrence of a percolation dictates that there cannot be an unbounded connected  component in the secondary network.

We discretize the continuous model onto an eight neighbors site model in the following manner.
Partition $\mathbb{R}^2$ into squares of side length $d_t$, i.e., the squares
$s(m,k)\triangleq[-\frac{d_t}{2}+m\cdot d_t,\frac{d_t}{2}+m\cdot d_t]\times [-\frac{d_t}{2}+k\cdot d_t,\frac{d_t}{2}+k\cdot d_t]$ where $k,m\in\mathbb{Z}$.
Partition each square into $n_p^2=\Bigl\lceil\frac{d_t}{D_f/\sqrt{2}}\Bigr\rceil ^2$ sub-squares, each of side length $\frac{d_t}{n_p}$.
We say that a square $s(m,k)$ is closed if it does not contain any secondary nodes or if every sub-square of $s(m,k)$ contains at least one primary user.
The probability that a square is closed is
\begin{flalign}
\left(1-e^{-\lambda_s d_t^2}\right)\left(1-e^{-\lambda_p\frac{d_t^2}{n_p}}\right)^{n_p}+e^{-\lambda_s d_t^2}.
\end{flalign}

Let $p_c$ be the critical probability of a site percolation model. By discrete percolation  if the probability for a site to be closed is greater than $1-p_c$, then every connected  component is finite a.s.\ For the eight neighbors site model this implies that when
\begin{flalign}
\left(1-e^{-\lambda_s d_t^2}\right)\left(1-e^{-\lambda_p\frac{d_t^2}{n_p}}\right)^{n_p}+e^{-\lambda_s d_t^2}>1-p_8.
\end{flalign}
every connected component in the secondary network is bounded a.s.\
Equivalently, an unbounded component may exist in the secondary network only if
\begin{flalign}
\left(1-e^{-\lambda_s d_t^2}\right)\left(1-e^{-\lambda_p\frac{d_t^2}{n_p^2}}\right)^{n_p}+e^{-\lambda_s d_t^2}<1-p_8.
\end{flalign}

Further algebra yields the following condition
\begin{flalign}
\lambda_p<-\frac{n_p}{d_t^2}\ln\left(1-\left(1-e^{-\lambda_s d_t^2}\right)^{-1/n_p}\left(1-e^{-\lambda_sd_t^2}-p_8\right)^{1/n_p}\right).
\end{flalign}
\end{proof}

By applying the lower bound $\frac{1}{3}\leq p_8$ (see \cite[Chapter 2.2]{FranceschettiMeester2007}) to Theorem \ref{theorem:necessary_cond2} we obtain the following corollary.
\begin{corollary}\label{corollary:necessary_cond2}
Let $n_p\triangleq\Bigl\lceil\frac{\sqrt{2}d_t}{D_f}\Bigr\rceil^2$.
 Then
\begin{flalign}\label{eq:cor_necessary_cond2}
&\lambda_s > d_t^{-2}\lambda_c(1),\nonumber\\
&\lambda_p > D_t^{-2}\lambda_c(1),\nonumber\\
&\lambda_p<-\frac{n_p}{d_t^2}\ln\left(1-\left(\frac{2}{3}-e^{-\lambda_sd_t^2}\right)^{1/n_p}\right).
\end{flalign}
are necessary conditions for simultaneous percolation in both networks.
\end{corollary}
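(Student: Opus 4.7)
The first two conditions of Corollary~\ref{corollary:necessary_cond2} are identical to the first two conditions of Theorem~\ref{theorem:necessary_cond2} and involve no dependence on $p_8$, so they transfer verbatim. The entire substance of the proof is to derive the third inequality from the third inequality of Theorem~\ref{theorem:necessary_cond2} using only the universal lower bound $p_8 \geq 1/3$ on the eight-neighbor site percolation threshold, which is cited from Chapter~2.2 of Franceschetti--Meester.

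Setting $x \triangleq e^{-\lambda_s d_t^2}$ for brevity, Theorem~\ref{theorem:necessary_cond2} guarantees that simultaneous percolation forces
$$\lambda_p \;<\; -\frac{n_p}{d_t^2}\,\ln\!\Bigl(1 - (1-x)^{-1/n_p}(1 - x - p_8)^{1/n_p}\Bigr).$$
The plan is to push the bound $p_8 \geq 1/3$ through this expression. From $p_8 \geq 1/3$ one obtains $1 - x - p_8 \leq \tfrac{2}{3} - x$, and monotonicity of the map $t \mapsto t^{1/n_p}$ on $t \geq 0$ then gives $(1 - x - p_8)^{1/n_p} \leq (\tfrac{2}{3} - x)^{1/n_p}$. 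I would propagate this inequality through the remaining operations in order: multiply by the positive factor $(1-x)^{-1/n_p}$, subtract from $1$ (reversing the inequality), apply the increasing map $\ln$, and finally multiply by the negative constant $-n_p/d_t^2$ (reversing once more) to arrive at an upper bound on $\lambda_p$. The three base inequalities of the corollary are then simply collected.

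The one step that will require careful bookkeeping is the simplification that collapses $(1-x)^{-1/n_p}(\tfrac{2}{3} - x)^{1/n_p}$ into the clean expression $(\tfrac{2}{3}-x)^{1/n_p}$ appearing in the statement of the corollary; this is where the interplay between the bound $p_8\geq 1/3$ and the factor $(1-x)^{-1/n_p}$ enters, using that $1-x\in(0,1]$ whenever $\lambda_s\geq 0$. I expect this minor algebraic reduction to be the only nonroutine part of the argument.
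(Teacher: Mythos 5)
Your route is the paper's: the paper derives this corollary in a single line by ``applying the lower bound $\tfrac13\le p_8$ to Theorem \ref{theorem:necessary_cond2}'', exactly as you propose, and your monotonicity bookkeeping is correct as far as it goes. Writing $x=e^{-\lambda_s d_t^2}$, the theorem's bound $U(p_8)=-\frac{n_p}{d_t^2}\ln\bigl(1-(1-x)^{-1/n_p}(1-x-p_8)^{1/n_p}\bigr)$ is decreasing in $p_8$, so $p_8\ge\tfrac13$ yields the valid necessary condition
\[
\lambda_p<-\frac{n_p}{d_t^2}\ln\left(1-\left(\frac{\tfrac23-x}{1-x}\right)^{1/n_p}\right).
\]

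However, the step you defer as a ``minor algebraic reduction'' --- collapsing $(1-x)^{-1/n_p}(\tfrac23-x)^{1/n_p}$ to $(\tfrac23-x)^{1/n_p}$ --- goes in the wrong direction, and it is exactly where the argument breaks. Since $1-x\in(0,1)$, the factor $(1-x)^{-1/n_p}$ is $\ge 1$, so deleting it \emph{decreases} the product, hence \emph{increases} $1-(\cdot)$, hence \emph{decreases} $-\ln(1-(\cdot))$: the corollary's bound is \emph{smaller} than the displayed one, so it is a strictly stronger condition and is not implied by the theorem. To justify the deletion you would need $(\tfrac23-x)(1-x)\ge 1-x-p_8$, i.e.\ $p_8\ge(1-x)(x+\tfrac13)$, for all relevant $x$; at $x=\tfrac13$ this demands $p_8\ge\tfrac49\approx0.444$, which fails for the actual eight-neighbor threshold $p_8\approx0.407$ and is certainly not supplied by $p_8\ge\tfrac13$. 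So either retain the factor and prove the corollary with $\bigl(\tfrac{2/3-x}{1-x}\bigr)^{1/n_p}$ in place of $(\tfrac23-x)^{1/n_p}$, or supply a genuinely different argument; as written, your proposal (and, it appears, the corollary as stated in the paper) has a real gap at this point.
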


\section{Sufficient Conditions for Simultaneous Connectivity}\label{sec:suf_cond}

In this section we present the sufficient conditions for the existence of both primary and secondary connected unbounded components.
We find these conditions by discretizing the continuous model onto a dependent site percolation model \cite{FranceschettiMeester2007,Grimmett1999}. Our objective is to achieve  discretization in such a way that if there exists an unbounded connected occupied component  in the discrete site percolation,  an unbounded connected component exists in the continuous model as well.
\begin{figure}
\centering
\includegraphics[scale=0.7]{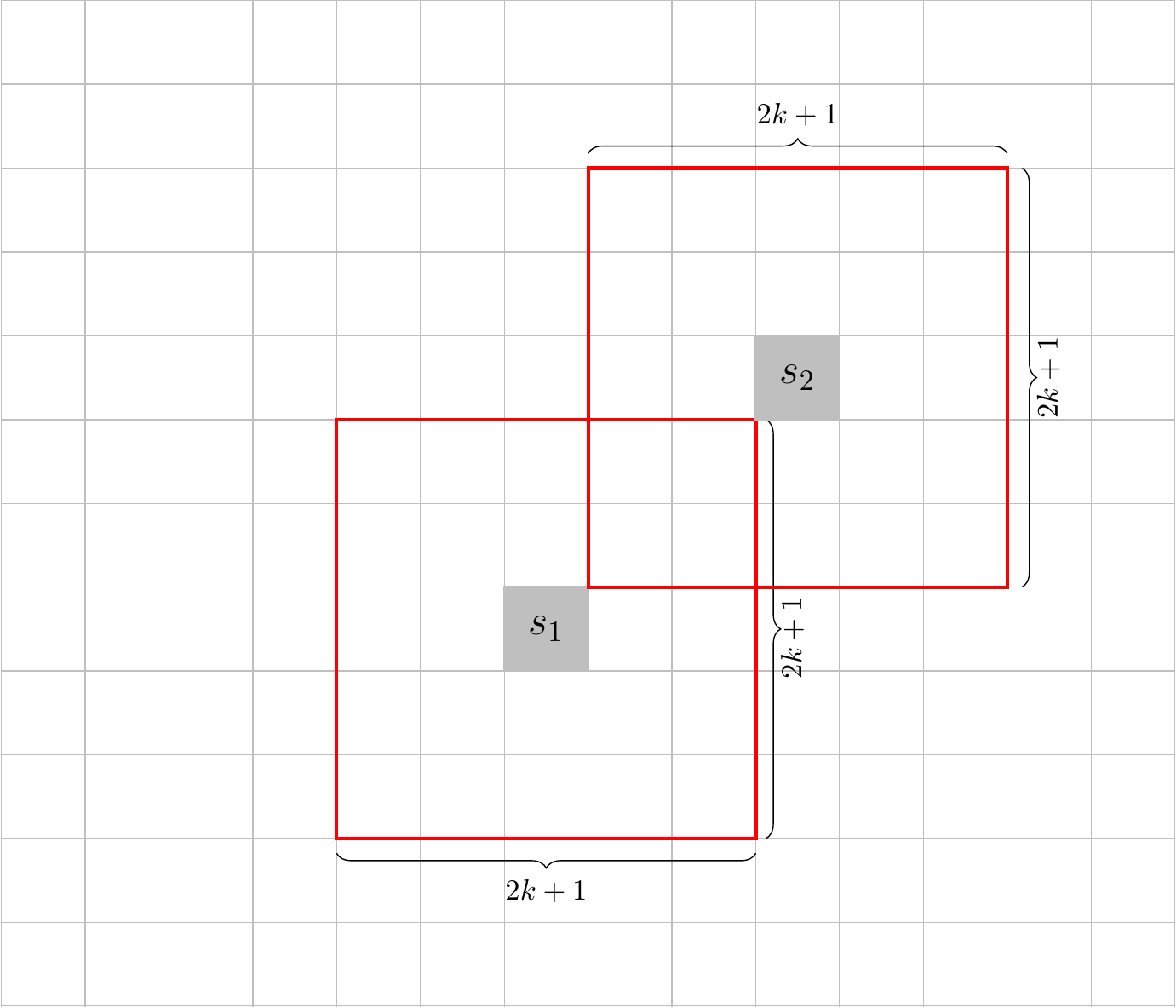}
\caption{A $k$-dependent site percolation. A red box around a site encloses its dependent sites. The sites $s_1$ and $s_2$ are independent.}
\label{fig_dependency}
\end{figure}
A dependent site percolation is a site percolation in which the state of a site may depend on the states of other sites. If the state of a site only depends  on the states of the sites that are separated by a path of minimum length $k<\infty$ we say that the model is $k$-dependent (see Fig.\ \ref{fig_dependency}).

Let $p$ be the marginal probability for a site to be open in a stationary dependent site model. By \cite[Theorem 2.3.1]{FranceschettiMeester2007}, there exists $p(k)$ such that for $p>p(k)$ there exists an unbounded occupied component a.s.\
It follows that there exists $p_4(k)$ such that  for all  $p>p_4(k) $ there exists an unbounded occupied component in the  $k$-dependent site percolation models with four neighbors.
Similarly, there exists $p_{8}(k)$ such that for all  $p>p_{8}(k) $ there exists an unbounded occupied component in the  $k$-dependent site percolation models with eight neighbors.

Note that since the exact values of $p_4(k)$ and $p_8(k)$ are not known we consider both the four and eight site percolation models in Theorem \ref{theorem:sufficient_cond1}. Nonetheless,  Corollary \ref{corollary:sufficient_cond1} considers upper bounds   for which the eight site model yields better results.

Let $S(t,D_f)$ stand for the (random) intersected area between two disks of radius $D_f$ with centers at distance\footnote{
$S(d,r)=-\frac{1}{2}d\sqrt{4r^2-d^2}-2r^2\arctan\left(\frac{d}{\sqrt{4r^2-d^2}}\right)+\pi r^2$.} $t$. Further, let
\begin{flalign}
f_{T,\ell}(t)=&\mathbbm{1}_{\{t\in[0,\ell]\}}\frac{4t}{\ell^4}\left(
\frac{\pi\ell^2}{2} -2\ell t+\frac{t^2}{2}\right)\nonumber\\
&\quad+\mathbbm{1}_{\{t\in[\ell,\sqrt{2}\ell]\}}\frac{4t}{\ell^4}\left(\ell^2\arcsin\left(\frac{2\ell^2-t^2}{t^2}\right)
+2\ell\sqrt{t^2-\ell^2} -\ell^2-\frac{t^2}{2}\right)
\end{flalign}
be the probability density function of the random distance $t$  between  the two centers, each generated independently from the box of side length $\ell$, which is derived in (\ref{density_function}).

Denote, $\ell_4=\frac{D_t}{\sqrt{5}}$, $\ell_8=\frac{D_t}{2\sqrt{2}}$,
\begin{flalign}
&\tilde{p}_4(D_f,d_t,\lambda_p,\lambda_s)\triangleq \lambda_s \frac{d_t^2}{5}e^{-\lambda_s\frac{d_t^2}{5}}e^{-\lambda_p\pi D_f^2}\nonumber\\
&\quad+\left(1-\lambda_s \frac{d_t^2}{5}e^{-\lambda_s\frac{d_t^2}{5}}-e^{-\lambda_s\frac{d_t^2}{5}}\right)e^{-\lambda_p\pi D_f^2}\left[2-\int_{-\infty}^{\infty} f_{T,\ell_{4}}(t) e^{\lambda_p[S(t,D_f)-\pi D_f^2]}dt\right],
\end{flalign}
and
\begin{flalign}
&\tilde{p}_8(D_f,d_t,\lambda _{p},\lambda_s)\triangleq \lambda_s \frac{d_t^2}{8}e^{-\lambda_s\frac{d_t^2}{8}}e^{-\lambda_p\pi D_f^2}\nonumber\\
&\quad+\left(1-\lambda_s \frac{d_t^2}{8}e^{-\lambda_s\frac{d_t^2}{8}}-e^{-\lambda_s\frac{d_t^2}{8}}\right)e^{-\lambda_p\pi D_f^2}\left[2-\int_{-\infty}^{\infty} f_{T,\ell_{8}}(t) e^{\lambda_p[S(t,D_f)-\pi D_f^2]}dt\right],
\end{flalign}

\begin{theorem}\label{theorem:sufficient_cond1}
Let $k_{4}=\left\lceil\frac{2\sqrt{5}D_f}{d_t}\right\rceil$
 and
$k_{8}=\left\lceil\frac{4\sqrt{2}D_f}{d_t}\right\rceil$. There exist  unbounded connected components simultaneously  in both the primary and secondary networks if the following conditions hold
\begin{flalign}\label{eq:sufficient_cond1}
&\lambda_p> D_t^{-2}\lambda_c(1)\nonumber\\
&\max\left\{\frac{\tilde{p}_4(D_f,d_t,\lambda_p,\lambda_s)}{p_4(k_{4})},\frac{\tilde{p}_8(D_f,d_t,\lambda_p,\lambda_s)}{p_8(k_{8})}\right\}> 1.
\end{flalign}
\end{theorem}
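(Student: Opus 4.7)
The plan is to treat the two networks separately. Since the primary process is never affected by the secondary nodes, the condition $\lambda_p>D_t^{-2}\lambda_c(1)=\lambda_c(D_t)$ (using the scaling $\lambda_c(D_t)=D_t^{-2}\lambda_c(1)$ of the critical density) immediately guarantees an unbounded primary component a.s.\ from the definition of $\lambda_c$ applied to the model $(X_p,D_t/2,\lambda_p)$. All remaining work lies in the secondary network, and my plan is to discretize it onto a $k$-dependent site percolation model and then invoke \cite[Theorem 2.3.1]{FranceschettiMeester2007}, in the same spirit as the discrete reduction used in the proof of Theorem \ref{theorem:existence_connected_comp}.

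For the discretization I would partition $\mathbb{R}^2$ into square cells of side $\ell$ and carry out the argument for two natural choices: $\ell_{4}=d_t/\sqrt{5}$ for the four-neighbor lattice and $\ell_{8}=d_t/(2\sqrt{2})$ for the eight-neighbor lattice. These are the largest side lengths for which any two points lying in neighboring cells remain within distance $d_t$, so that two active secondary nodes in neighboring open cells are automatically in communication. A cell is declared \emph{open} if it contains at least one \emph{active} secondary node, meaning a secondary whose $D_f$-neighborhood is free of primary nodes. With this definition, an unbounded open cluster in the discrete model immediately produces an unbounded connected component of active secondary nodes in the continuum.

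To lower bound the marginal cell-open probability $p$, I would condition on the $\mathrm{Poisson}(\lambda_s\ell^2)$ number $N$ of secondary nodes in the cell. The event $\{N=1\}\cap\{\text{the unique secondary is active}\}$ contributes the first summand $\lambda_s\ell^2 e^{-\lambda_s\ell^2}\cdot e^{-\lambda_p\pi D_f^2}$ of $\tilde p_4$ (resp.\ $\tilde p_8$). For $\{N\ge 2\}$ I retain only two specific secondary nodes; conditionally on $N\ge 2$ their positions are i.i.d.\ uniform on the cell, hence their separation $T$ has density $f_{T,\ell}$, and inclusion-exclusion on the two activity events gives
\begin{equation*}
\Pr(\text{at least one of the two is active}\mid T=t)=2e^{-\lambda_p\pi D_f^2}-e^{-\lambda_p(2\pi D_f^2-S(t,D_f))},
\end{equation*}
where $S(t,D_f)$ is the intersection area of two $D_f$-disks at distance $t$. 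Integrating against $f_{T,\ell}$ and pulling out the common factor $e^{-\lambda_p\pi D_f^2}$ produces exactly the bracketed integral in $\tilde p_4,\tilde p_8$. Summing the $N=1$ and $N\ge 2$ contributions yields $p\ge\tilde p_4$ (respectively $p\ge\tilde p_8$).

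It remains to identify the dependency range and close the argument. The openness of a cell depends on the secondary nodes inside it (which are independent across disjoint cells) and on the primary points lying within distance $D_f$ of the cell, so two cells separated by more than $2D_f$ have independent statuses; at lattice scale this gives dependency range $k_4=\lceil 2\sqrt{5}D_f/d_t\rceil$ and $k_8=\lceil 4\sqrt{2}D_f/d_t\rceil$, matching the values in the theorem. Applying \cite[Theorem 2.3.1]{FranceschettiMeester2007} to the $k$-dependent model, an unbounded open cluster a.s.\ exists as soon as $p>p_4(k_4)$ or $p>p_8(k_8)$, and combined with $p\ge\tilde p_j$ this is guaranteed by the hypothesis $\max\{\tilde p_4/p_4(k_4),\tilde p_8/p_8(k_8)\}>1$. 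I expect the main technical obstacle to be the explicit derivation of the distance density $f_{T,\ell}$ and the careful packaging of the $N\ge 2$ lower bound in terms of only two uniformly placed nodes; once those ingredients are in hand, the $k$-dependent site percolation machinery closes the proof.
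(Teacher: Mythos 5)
Your proposal is correct and follows essentially the same route as the paper: the same discretization onto four- and eight-neighbor $k$-dependent site models, the same definition of an open cell, the same lower bound on the marginal open probability via the $N=1$ term plus inclusion--exclusion over two uniformly placed nodes using $f_{T,\ell}$ and $S(t,D_f)$, and the same appeal to \cite[Theorem 2.3.1]{FranceschettiMeester2007}. Your choice $\ell_4=d_t/\sqrt{5}$, $\ell_8=d_t/(2\sqrt{2})$ is in fact the one consistent with the stated formulas for $\tilde p_4$, $\tilde p_8$, $k_4$, $k_8$ (the paper's proof writes $\sqrt{5}\ell=D_t$, apparently a typo for $d_t$), so no correction is needed.
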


\begin{proof}

We  discretize the continuous model onto two site models - a four neighbors site model and an eight neighbors site model. In both of the models  $(k\ell,m\ell),\quad k,m\in\mathbb{Z}$ are the centers of the boxes with side length $\ell\in\mathbb{R}^+$. We say that a box is occupied if there is a secondary user in the box that sees a (symmetric) spectrum opportunity.

We define a site's neighbors, in the four neighbors model, to be all sites that have a common side with the site. We define a site's neighbors, in the eight neighbors model, to be all sites that have a common side point with the site.
 In both models, a site is said to be occupied if there is at least one secondary user which sees an opportunity. In order to limit dependency and also to allow each secondary user in a box to communicate with its neighbor, we set   $\ell$ in the four neighbors site model such that $\sqrt{5}\ell=D_t$, that is $\ell=\frac{D_t}{\sqrt{5}}$. In this scenario the dependency is of
 \begin{flalign}
 k_{4}=\left\lceil\frac{2D_f}{\ell}\right\rceil=\left\lceil\frac{2\sqrt{5}D_f}{d_t}\right\rceil
 \end{flalign}
 sites. In the eight neighbors site model we set $\ell$ such that $\ell=\frac{D_t}{2\sqrt{2}}$ and
 \begin{flalign}
 k_{8}=\left\lceil\frac{2D_f}{\ell}\right\rceil=\left\lceil\frac{4\sqrt{2}D_f}{d_t}\right\rceil
 \end{flalign}
 sites.

Let $p$ be the (marginal) probability that there exists at least one secondary user in the box  $B=[0,\ell]^{2}$ which sees a communication opportunity.
By \cite[Theorem 2.3.1]{FranceschettiMeester2007} there exists a probability $p(k)$, for each of the models, such that for every $p>p(k)$ there is an unbounded occupied component in the discrete model a.s.\ We denote these probabilities by $p_4(k)$ and $p_8(k)$.

Denote the event that there are $m$ secondary nodes in the box $B$ by $E_m$. The $m$ secondary nodes in the box are denoted by $U_i$, $i\in\{1,\ldots,m\}$ where the index $i$ is associated with a user arbitrarily. Further, let $O_i$ be the event in which the secondary user $U_i$ sees an opportunity.  By definition,
\begin{flalign}
p&=\sum_{m=1}^{\infty}\Pr(E_m)\Pr\left(\cup_{i=1}^m O_i\right)\nonumber\\
&\geq \Pr(E_1)\Pr\left( O_1\right)+\sum_{m=2}^{\infty}\Pr(E_m)\Pr\left(O_1\cup O_2\right)\nonumber\\
&= \lambda_s \ell^{2}e^{-\lambda_s\ell^{2}}e^{-\lambda_p\pi D_f^2}+\sum_{m=2}^{\infty}\Pr(E_m)\left[\Pr(O_1)+\Pr(O_2)-\Pr(O_1\cap O_2)\right].
\end{flalign}
Let $T$ be the (random) distance between the two secondary nodes in the box $B$. The probability density function $f_{T,\ell}(t)$ is derived in (\ref{density_function}).
We also denote by $S(t,D_f)$ the intersection area between two disks of radius $D_f$ with centers at distance $t$. It follows that
\begin{flalign}
&p-\lambda_s \ell^{2}e^{-\lambda_s\ell^{2}}e^{-\lambda_p\pi D_f^2}\geq\sum_{m=2}^{\infty}\Pr(E_m)\left[\Pr(O_1)+\Pr(O_2)-\Pr(O_1\cap O_2)\right]\nonumber\\
&\quad=\left(1-\Pr(E_1)-\Pr(E_0)\right)\left[\Pr(O_1)+\Pr(O_2)-\Pr(O_1\cap O_2)\right]\nonumber\\
&\quad=\left(1-\lambda_s \ell^{2}e^{-\lambda_s\ell^{2}}-e^{-\lambda_s\ell^{2}}\right)\left[2e^{-\lambda_p\pi D_f^2}-\Pr(O_1\cap O_2)\right]\nonumber\\
&\quad= \left(1-\lambda_s \ell^{2}e^{-\lambda_s\ell^{2}}-e^{-\lambda_s\ell^{2}}\right)\left[2e^{-\lambda_p\pi D_f^2}-\int_{-\infty}^{\infty} f_{T,\ell}(t) e^{-\lambda_p( 2\pi D_f^2- S(t,D_f))}dt\right]\nonumber\\
&\quad= \left(1-\lambda_s \ell^{2}e^{-\lambda_s\ell^{2}}-e^{-\lambda_s\ell^{2}}\right)e^{-\lambda_p\pi D_f^2}\left[2-\int_{-\infty}^{\infty} f_{T,\ell}(t) e^{\lambda_p[S(t,D_f)-\pi D_f^2]}dt\right].
\end{flalign}
\end{proof}

A simpler but looser bound can be derived.
\begin{corollary}\label{corollary:sufficient_cond1}
There exists an unbounded connected component in both the primary and secondary networks if the following conditions hold
\begin{flalign}
&\lambda_p> D_t^{-2}\lambda_c(1)\nonumber\\
&\lambda_p<(\pi D_f^2)^{-1}\ln\left(\frac{1-e^{-\lambda_s d_t^2/8}}{1-\left(\frac{1}{3}\right)^{(2k_8+1)^2}}\right).
\end{flalign}
\end{corollary}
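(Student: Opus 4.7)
The plan is to derive this corollary as a closed-form but strictly weaker consequence of the eight-neighbour half of Theorem \ref{theorem:sufficient_cond1}; the idea is to apply two elementary simplifications to the inequality $\tilde p_8(D_f,d_t,\lambda_p,\lambda_s)>p_8(k_8)$.

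First I would lower-bound $\tilde p_8$ by a clean exponential expression. The integral $\int_{-\infty}^{\infty} f_{T,\ell_8}(t)\,e^{\lambda_p[S(t,D_f)-\pi D_f^2]}\,dt$ is at most $1$, since $S(t,D_f)\in[0,\pi D_f^2]$ makes the exponent non-positive and $f_{T,\ell_8}$ is a probability density. Hence the bracketed factor $[2-\int]$ is at least $1$. Substituting this crude estimate into the definition of $\tilde p_8$, the two Poisson-count prefactors (the $\lambda_s\frac{d_t^2}{8}e^{-\lambda_s d_t^2/8}$ term and the $(1-\lambda_s\frac{d_t^2}{8}e^{-\lambda_s d_t^2/8}-e^{-\lambda_s d_t^2/8})$ term) collapse, yielding
\[
\tilde p_8(D_f,d_t,\lambda_p,\lambda_s)\;\geq\; e^{-\lambda_p\pi D_f^2}\bigl(1-e^{-\lambda_s d_t^2/8}\bigr).
\]

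Second, I would upper-bound the dependent critical probability as $p_8(k_8)\leq 1-(1/3)^{(2k_8+1)^2}$. The natural route is a coarse-graining/stochastic-domination argument in the spirit of Liggett--Schonmann--Stacey: a $k$-dependent Bernoulli-like field on $\mathbb{Z}^2$ whose marginal closed-probability is at most $(1/3)^{(2k+1)^2}$ stochastically dominates an independent field whose parameter lies above the independent 8-neighbour critical probability, using that each site's dependency neighbourhood has size $(2k_8+1)^2$ and that the independent 8-neighbour critical probability is bounded below by $1/3$ as cited in the paper.

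Combining the two bounds, the sufficient condition $\tilde p_8>p_8(k_8)$ from Theorem \ref{theorem:sufficient_cond1} is implied by $e^{-\lambda_p\pi D_f^2}(1-e^{-\lambda_s d_t^2/8})>1-(1/3)^{(2k_8+1)^2}$, and taking logarithms followed by division by $\pi D_f^2$ reproduces exactly the second displayed inequality in the corollary. The first inequality $\lambda_p>D_t^{-2}\lambda_c(1)$ is inherited verbatim from Theorem \ref{theorem:sufficient_cond1} and by itself guarantees percolation of the primary network. The main obstacle is the critical-probability estimate $p_8(k)\leq 1-(1/3)^{(2k+1)^2}$: the first simplification is a one-line monotonicity calculation, but the blocking/domination step for $k$-dependent site percolation is where the genuine technical work lives; everything else is algebraic rearrangement.
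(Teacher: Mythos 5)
Your proposal is correct and follows essentially the same route as the paper: both reduce to the single inequality $e^{-\lambda_p\pi D_f^2}\bigl(1-e^{-\lambda_s d_t^2/8}\bigr) > 1-\left(\frac{1}{3}\right)^{(2k_8+1)^2}$ --- the paper by truncating the union bound to $\Pr(O_1)$ so that $p\ge e^{-\lambda_p\pi D_f^2}\bigl(1-e^{-\lambda_s d_t^2/8}\bigr)$ directly, you by the equivalent observation that the bracketed factor in $\tilde{p}_8$ is at least $1$ --- and both then invoke the upper bound $p_8(k_8)\le 1-\left(\frac{1}{3}\right)^{(2k_8+1)^2}$ cited from Franceschetti--Meester. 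The one caveat is your sketch of that last bound: you invoke a Liggett--Schonmann--Stacey domination using that the independent $8$-neighbour critical probability is bounded \emph{below} by $1/3$, which is the wrong direction to conclude percolation of the dominated field; the cited reference (and the paper) instead gets the bound from a direct Peierls argument on the $k$-dependent field, extracting at least $n/(2k_8+1)^2$ mutually independent sites from any blocking circuit of length $n$.
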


\begin{proof}
Let $p$ be the (marginal) probability that there exists at least one secondary user in the box  $B=[0,\ell]^{2}$ which sees a communication opportunity.  By the proof of Theorem \ref{theorem:sufficient_cond1}
  \begin{flalign}
p&=\sum_{m=1}^{\infty}\Pr(E_m)\Pr\left(\cup_{i=1}^m O_i\right)\nonumber\\
&\geq \sum_{m=1}^{\infty}\Pr(E_m)\Pr\left(O_1\right)= e^{-\lambda_p\pi D_f^2}\sum_{m=1}^{\infty}\Pr(E_m)\nonumber\\
&=e^{-\lambda_p\pi D_f^2}\left(1-e^{-\lambda_s\pi \ell^2}\right)=e^{-\lambda_p\pi D_f^2}\left(1-e^{-\lambda_s \frac{d_t^2}{8}}\right).
\end{flalign}

Additionally, by Sections 2.2 and 2.3 in \cite{FranceschettiMeester2007}, $1-p_8(k)< \left(\frac{1}{3}\right)^{(2k_{8}+1)^2}$, it follows that
\begin{flalign}
p_8(k)> 1-\left(\frac{1}{3}\right)^{(2k_{8}+1)^2}.
\end{flalign}
Therefore, a more restrictive requirement is
\begin{flalign}\label{res_eq}
e^{-\lambda_p\pi D_f^2}\left(1-e^{-\lambda_s \frac{d_t^2}{8}}\right) > 1-\left(\frac{1}{3}\right)^{(2k_{8}+1)^2}.
\end{flalign}
Eq.\ (\ref{res_eq}) can be reorganized in the following manner
\begin{flalign}
&e^{-\lambda_p\pi D_f^2} > \frac{1-\left(\frac{1}{3}\right)^{(2k_{8}+1)^2}}{1-e^{-\lambda_s \frac{d_t^2}{8}}}\nonumber\\
&\lambda_p < \frac{1}{\pi D_f^2}\ln\left(\frac{1-e^{-\lambda_s \frac{d_t^2}{8}}}{1-\left(\frac{1}{3}\right)^{(2k_8+1)^2}}\right).
\end{flalign}
\end{proof}

\section{Conclusion}\label{sec:conclusion}
In this paper we presented several properties of the simultaneous connectivity of the heterogeneous network model. We proved that there cannot exist more than one unbounded connected component in each of the networks. Moreover, we presented sufficient  as well as necessary conditions for the simultaneous connectivity of the heterogeneous model. Furthermore, we argued that for each pair of densities greater than the critical density without inter-network interference, there exists a small enough guard zone such that there exist unbounded connected components in each of the networks.
We hope that these results will motivate further discussion on applications and performance of such heterogenous ad-hoc networks.
\appendix
\section{}
In this appendix we derive the density function $f_{T,\ell}(t)$ of the random distance $t$ between two centers of disks (see the proof of Theorem \ref{theorem:sufficient_cond1}).

 Let $X_1,X_2,Y_1,Y_2\sim U[0,\ell]$ be statistically independent.
Define the random variables $T_x$ and $T_y$ in the following manner:
\begin{flalign}
T_x&=X_1-X_2,\nonumber\\
T_y&=Y_1-Y_2.
\end{flalign}
We find the probability density function $T_x$ by the transformation formula.
Let $V=X_2$, then
\begin{flalign}
f_{T_x,V}(t_x,v)=\frac{f_{X_1,X_2}(t_x+v,v)}{1}=f_{X_1}(t_x+v)f_{X_2}(v)
\end{flalign}
By the law of total probability:
\begin{flalign}
f_{T_x}(t_x)&=\int_{-\infty}^{\infty}f_{X_1}(t_x+v)f_{X_2}(v)dv\nonumber\\
&=\frac{1}{\ell^2}\int_{-\infty}^{\infty}\mathbbm{1}_{\{t_x+v\in[0,\ell]\}}\mathbbm{1}_{\{v\in[0,\ell]\}}dv\nonumber\\
&=\frac{\ell-|t_x|}{\ell^2}\cdot \mathbbm{1}_{\{|t_x|\in[0,\ell]\}}
\end{flalign}
It follows that,
\begin{flalign}
f_{|T_x|}(|t_x|)=2\cdot \frac{\ell-|t_x|}{\ell^2}\cdot \mathbbm{1}_{\{|t_x|\in[0,\ell]\}}
\end{flalign}
Similarly,
\begin{flalign}
f_{|T_y|}(|t_y|)=2\cdot \frac{\ell-|t_y|}{\ell^2}\cdot \mathbbm{1}_{\{|t_y|\in[0,\ell]\}}.
\end{flalign}
Define the random variables
\begin{flalign}
&T=\sqrt{|T_x|^2+|T_y|^2}\nonumber\\
&W=|T_y|
\end{flalign}
By the transformation formula
\begin{flalign}
f_{T,W}(t,w)&= f_{|T_x|}\left(\sqrt{t^2-w^2}\right)f_{|T_y|}(w)\left(\frac{2\sqrt{t^2-w^2}}{2t}\right)^{-1}\nonumber\\
&=\frac{4t}{\ell^4}\frac{(\ell-\sqrt{t^2-w^2})(\ell-w)}{\sqrt{t^2-w^2}} \mathbbm{1}_{\{\sqrt{t^2-w^2}\in[0,\ell]\}}\mathbbm{1}_{\{w\in[0,\ell]\}}
\end{flalign}
By the law of total probability
\begin{flalign}\label{density_function}
f_{T,\ell}(t)&=\frac{4t}{\ell^4}\int_{\infty}^{\infty}\frac{(\ell-\sqrt{t^2-w^2})(\ell-w)}{\sqrt{t^2-w^2}} \mathbbm{1}_{\{\sqrt{t^2-w^2}\in[0,\ell]\}}\mathbbm{1}_{\{w\in[0,\ell]\}}dw\nonumber\\
&=\mathbbm{1}_{\{t\in[0,\ell]\}}\frac{4t}{\ell^4}\int_{0}^{t}\left(\frac{\ell^2}{\sqrt{t^2-w^2}}-\frac{\ell w}{\sqrt{t^2-w^2}} -\ell+w\right)dw\nonumber\\
&\quad+\mathbbm{1}_{\{t\in[\ell,\sqrt{2}\ell]\}}\frac{4t}{\ell^4}\int_{\sqrt{t^2-\ell^2}}^{\ell}\left(\frac{\ell^2}{\sqrt{t^2-w^2}}-\frac{\ell w}{\sqrt{t^2-w^2}} -\ell+w\right)dw\nonumber\\
&=\mathbbm{1}_{\{t\in[0,\ell]\}}\frac{4t}{\ell^4}\left.\left(
\ell^2\arcsin\left(\frac{w}{t}\right)+\ell\sqrt{t^2-w^2} -\ell w+\frac{w^2}{2}\right)\right|_{0}^{t}\nonumber\\
&\quad+\mathbbm{1}_{\{t\in[\ell,\sqrt{2}\ell]\}}\frac{4t}{\ell^4}\left.\left(\ell^2\arcsin\left(\frac{w}{t}\right)+\ell\sqrt{t^2-w^2} -\ell w+\frac{w^2}{2}\right)\right|_{\sqrt{t^2-\ell^2}}^{\ell}\nonumber\\
&=\mathbbm{1}_{\{t\in[0,\ell]\}}\frac{4t}{\ell^4}\left[\left(
\ell^2\arcsin\left(1\right) -\ell t+\frac{t^2}{2}\right)-\ell t \right]\nonumber\\
&\quad+\mathbbm{1}_{\{t\in[\ell,\sqrt{2}\ell]\}}\frac{4t}{\ell^4}\left[\left(\ell^2\arcsin\left(\frac{\ell}{t}\right)+\ell\sqrt{t^2-\ell^2} -\ell^2+\frac{\ell^2}{2}\right)\right.\nonumber\\
&\quad\left.-\left(\ell^2\arcsin\left(\frac{\sqrt{t^2-\ell^2}}{t}\right)+\ell^2 -\ell \sqrt{t^2-\ell^2}+\frac{t^2-\ell^2}{2}\right)\right]\nonumber\\
&=\mathbbm{1}_{\{t\in[0,\ell]\}}\frac{4t}{\ell^4}\left(
\frac{\pi\ell^2}{2} -2\ell t+\frac{t^2}{2}\right)\nonumber\\
&\quad+\mathbbm{1}_{\{t\in[\ell,\sqrt{2}\ell]\}}\frac{4t}{\ell^4}\left[\ell^2\left(\arcsin\left(\frac{\ell}{t}\right)-\arcsin\left(\frac{\sqrt{t^2-\ell^2}}{t}\right)\right)
+2\ell\sqrt{t^2-\ell^2} -\ell^2-\frac{t^2}{2}\right]\nonumber\\
&=\mathbbm{1}_{\{t\in[0,\ell]\}}\frac{4t}{\ell^4}\left(
\frac{\pi\ell^2}{2} -2\ell t+\frac{t^2}{2}\right)\nonumber\\
&\quad+\mathbbm{1}_{\{t\in[\ell,\sqrt{2}\ell]\}}\frac{4t}{\ell^4}\left(\ell^2\arcsin\left(\frac{2\ell^2-t^2}{t^2}\right)
+2\ell\sqrt{t^2-\ell^2} -\ell^2-\frac{t^2}{2}\right)\nonumber\\
\end{flalign}

\bibliographystyle{amsplain}
\providecommand{\bysame}{\leavevmode\hbox to3em{\hrulefill}\thinspace}
\providecommand{\MR}{\relax\ifhmode\unskip\space\fi MR }
\providecommand{\MRhref}[2]{%
  \href{http://www.ams.org/mathscinet-getitem?mr=#1}{#2}
}
\providecommand{\href}[2]{#2}

\printindex

\end{document}